%
%

\documentclass[12pt,draftclsnofoot,onecolumn,journal,letterpaper]{IEEEtran}
\usepackage[final]{graphicx}
\usepackage[reqno]{amsmath}
\usepackage{amssymb}
\usepackage{url}
\usepackage{color}
\usepackage{times}
\usepackage{wrapfig}
\usepackage{subfig}
\usepackage{amsthm}


\theoremstyle{plain}
\newtheorem{theorem}{Theorem}
\newtheorem{lemma}[theorem]{Lemma}
\newtheorem{remark}[theorem]{Remark}


\definecolor{emph}{rgb}{0.61,0.00,0.00}

\graphicspath{{images/}{../images/}}

\bibliographystyle{unsrt}
\begin{document}

\title{Polar Coding for Fading Channels:\\Binary and Exponential Channel Cases}

\author{
 \IEEEauthorblockN{Hongbo~Si, O.~Ozan~Koyluoglu,~\IEEEmembership{Member,~IEEE,} and Sriram~Vishwanath,~\IEEEmembership{Senior Member,~IEEE}
}
\thanks{H.~Si and S.~Vishwanath are with the Laboratory for Informatics, Networks, and Communications,
Wireless Networking and Communications Group,
The University of Texas at Austin, Austin, TX 78712.
Email: sihongbo@mail.utexas.edu, sriram@austin.utexas.edu.}
\thanks{O.~O.~Koyluoglu is with the
Department of Electrical and Computer Engineering,
The University of Arizona, Tucson, AZ 85721.
Email: ozan@email.arizona.edu.}
}



\maketitle

\begin{abstract}

This work presents a polar coding scheme for fading channels, focusing primarily on fading binary symmetric and additive exponential noise channels. For fading binary symmetric channels, a hierarchical coding scheme is presented, utilizing polar coding both over channel uses and over fading blocks. The receiver uses its channel state information (CSI) to distinguish states, thus constructing an overlay erasure channel over the underlying fading channels. By using this scheme, the capacity of a fading binary symmetric channel is achieved without CSI at the transmitter. Noting that a fading AWGN channel with BPSK modulation and demodulation corresponds to a fading binary symmetric channel, this result covers a fairly large set of practically relevant channel settings.

For fading additive exponential noise channels, expansion coding is used in conjunction to polar codes. Expansion coding transforms the continuous-valued channel to multiple (independent) discrete-valued ones. For each level after expansion, the approach described previously for fading binary symmetric channels is used. Both theoretical analysis and numerical results are presented, showing that the proposed coding scheme approaches the capacity in the high SNR regime. Overall, utilizing polar codes in this (hierarchical) fashion enables coding without CSI at the transmitter, while approaching the capacity with low complexity.
\end{abstract}

\begin{IEEEkeywords}
Binary symmetric channel, Channel coding, Expansion coding, Fading channels, Polar codes
\end{IEEEkeywords}

\IEEEpeerreviewmaketitle

\section{Introduction}

Polar codes are the first family of provably capacity achieving codes for symmetric binary-input discrete memoryless channels (B-DMC) with low encoding and decoding complexity \cite{Arikan:Channel08} \cite{Arikan:Error09}. These codes polarize the underlying channel in the sense that, via channel combining and channel splitting stages, multiple uses of the given channel are transformed into equivalent polarized ones: either purely noisy (referred to as ``bad'' channel instances) or noiseless (referred to as ``good'' channel instances). Then, information symbols are mapped to the good instances of polarized channels, whereas channel inputs corresponding to the bad instances are fixed and shared between the transmitter and receiver. It is shown in \cite{Arikan:Channel08} that the fraction of the good channel instances approaches the symmetric capacity of the channel, which is equal to the capacity of the underlying channel if the channel is symmetric. That is, polar codes achieve the capacity of symmetric B-DMCs. This phenomenon of channel polarization has then been generalized to arbitrary discrete memoryless channels with a construction complexity to the same order and a similar error probability behavior \cite{Sasoglu:Polarization09}. Moreover, polar codes are proved to be optimal for lossy compression with respect to binary symmetric source \cite{Arikan:Source10}\cite{Korada:Source10}, and then further extended to coding for larger source alphabets \cite{Mohammad:Source10}.

\IEEEpubidadjcol

Recently, polar codes have been adapted to channels with non-discrete inputs as well. In \cite{Abbe:Polar11}, using polarization results for multiple access channels \cite{Abbe:Polar12}, a polar coding scheme for additive white Gaussian noise (AWGN) channels is presented. It is shown that the approach of using multiple access channel coding with a large number of binary-input users possesses much better complexity attributes compared to that of using single-user channels with large input cardinality. In a separate work \cite{Ozan:Expansion12}, by adopting discrete polar codes as embedded codes, an expansion coding approach is presented, and the capacity of additive exponential noise channel is shown to be achievable in the high SNR regime.

The analysis of polar coding for fading channels, with either discrete-valued or continuous-valued noises, is still limited. Recent work \cite{Boutros:Fading13} investigates binary input real number output AWGN fading channel, where the fading coefficient is assumed to be one of the two states with equal probabilities. These fading coefficients are assumed to follow arbitrary distributions with the requirement of satisfying some tail probability constraints. For this setup, the authors proposed polar coding schemes where symbols are multiplexed in a specific fashion at the encoder. In particular, the paper analyzes diagonal, horizontal, and uniform multiplexers; and, the corresponding diversity and outage analysis have been performed. Another recent work \cite{Bravo:Fading13} focuses on polar coding schemes for Rayleigh fading channel under two scenarios: block fading with known channel state information (CSI) at the transmitter and fast fading with fading distribution known at the transmitter. For the latter case, the channel is shown to be symmetric, and through quantization of the channel output, the polar coding scheme is shown to achieve a constant gap to the capacity.

In this work, we focus on a block fading model without the CSI at the transmitter, and propose a hierarchical polar coding scheme for such channels. First, we focus on fading binary symmetric channel (BSC), which is an important model as it is closely related to an AWGN block fading channel with BPSK modulation and demodulation. Such binary input AWGN models are previously analyzed in \cite{Arikan:Polar11}\cite{Tse:Polar12} to evaluate the performance of polar codes over AWGN channels. Here, we focus on communication channel models that involve fading, where the channel coefficients vary according to a block fading model. This scenario of fading AWGN with BPSK modulation resembles a fading binary symmetric channel model, where each fading block has a cross-over probability depending on the corresponding channel state realization. Specifically, AWGN channel states with higher SNRs map to binary symmetric channels with lower crossover probabilities. For this binary symmetric fading model, we propose a novel polar coding approach that utilizes polarization in a \emph{hierarchical} manner \emph{without} channel state information (CSI) at the transmitter (with channel state statistics assumed to be known at the transmitter). The key factor enabling our coding scheme is the hierarchical utilization of polar coding. More precisely, polar codes are not only designed over channel uses for each fading state, but also utilized over fading blocks. By taking advantage of the degradedness property of channel polarization between different BSCs, an erasure model (over fading blocks) is constructed for every channel instance that polarizes differently depending on the channel states. It is shown that this proposed coding scheme, without instantaneous CSI at the transmitter, achieves the capacity of the fading binary symmetric channel.

As an additional analog fading model (in addition to the AWGN fading scenario with BPSK modulation), we consider additive exponential noise channels.
In analog channels, the additive exponential noise (AEN) channel is of particular interest as it models worst-case noise given mean and non-negativity constraints \cite{Verdu:Exponential96}. In addition, the AEN model naturally arises in non-coherent communication settings, and in optical communication scenarios \cite{Martinez:Communication11}\cite{LeGoff:Capacity11}. In \cite{Ozan:Expansion12}, an expansion scheme is proposed to achieve the capacity of AEN channels. Here, we adopt a similar approach for the fading AEN channels. In particular, due to the decomposition property of the exponential distribution, we show that a fading AEN channel can be transformed into a set of fading BSCs. Then, by employing the aforementioned polar coding scheme for fading BSCs at each level, we show that the proposed method approaches to the capacity of fading AEN channels in the high SNR regime.

In both cases considered in this paper, utilizing polar codes in such a novel (hierarchical) way enables coding without CSI at the transmitter, a practically important scenario in wireless systems. In addition, the low encoding and decoding complexity of polar codes are inherited in the proposed schemes. (The scaling of complexity with respect to the system parameters are detailed in the later parts of the sequel.) Therefore, the proposed approach, by having both low complexity and realistic CSI assumption properties, is suitable for practical utilization of polar codes over fading channels (especially for those channels with long fading coherence intervals).

The rest of paper is organized as follows. After a brief introduction of the preliminary results on polar codes in Section~II, the polar coding scheme for fading binary symmetric channels is detailed in Section~III. Section~IV is devoted to the study of fading additive exponential noise channels. Finally, concluding remarks are provided in Section~V.


\section{Introduction to Polar Codes}

The construction of polar code is based on the observation of channel polarization. Consider a binary-input discrete memoryless channel $W:\mathcal{X}\to\mathcal{Y}$, where $\mathcal{X}=\{0,1\}$. Define
\begin{equation}
F=\left[\begin{array}{cc}
  1 & 0 \\
  1 & 1
\end{array}\right].\nonumber
\end{equation}
Let $B_N$ be the bit-reversal operator defined in \cite{Arikan:Channel08}, where $N=2^n$. By applying the transform $G_N=B_NF^{\otimes n}$ ($F^{\otimes n}$ denotes the $n^{\text{th}}$ Kronecker power of $F$) to the message $u_{1:N}$, the encoded $x_{1:N}=u_{1:N}G_N$ is transmitted through $N$ independent copies of channel $W$. Then, consider $N$ binary-input coordinate channels $W_N^{(i)}:\mathcal{X}\to\mathcal{Y}^N\times\mathcal{X}^{i-1}$, where for each $i\in\{1,\ldots,N\}$ the transition probability is given by
\begin{equation}
W_N^{(i)}(y_{1:N},u_{1:{i-1}}|u_i)\triangleq \sum_{u_{{i+1}:N}}\frac{1}{2^{N-1}}W^N(y_{1:N}|u_{1:N}G_N).\nonumber
\end{equation}
Here, as $N$ tends to infinity, the channels $\{W_N^{(i)}\}_{1:N}$ polarize to either noiseless or purely noisy ones, and the fraction of noiseless channels is close to $I(W)$, the symmetric capacity of the channel $W$ \cite{Arikan:Channel08}.

To this end, polar codes can be considered as $G_N$-coset codes with parameter $(N,K,\mathcal{A},u_{\mathcal{A}^c})$, where $u_{\mathcal{A}^c}\in\mathcal{X}^{N-K}$ is the frozen vector (can be set to all-zeros for symmetric channels \cite{Arikan:Channel08}, which is the focus of this paper), and the information set $\mathcal{A}$ is chosen as a $K$-element subset of $\{1,\ldots,N\}$ such that the Bhattacharyya parameters satisfy $Z(W_N^{(i)})\leq Z(W_N^{(j)})$ for all $i\in\mathcal{A}$ and $j\in\mathcal{A}^c$. (The indices in $\mathcal{A}$ are ``good'' channel indices, whereas those in $\mathcal{A}^c$ corresponds to ``bad'' channel indices.)

The decoder in the polar coding scheme is a successive cancellation (SC) decoder, which gives an estimate $\hat{u}_{1:N}$ of $u_{1:N}$ given the knowledge of $\mathcal{A}$, $u_{\mathcal{A}^c}$, and $y_{1:N}$ by computing
\begin{align}
\hat{u}_i \triangleq \left\{
\begin{array}{cl}
  0, & \text{if }i\in\mathcal{A}^c, \\
  d_i(y_{1:N},\hat{u}_{1:{i-1}}), & \text{if }i\in \mathcal{A},
\end{array}
\right.\nonumber
\end{align}
in the order $i$ from $1$ to $N$, where
\begin{align}
d_i(y_{1:N},\hat{u}_{1:{i-1}})\triangleq \left\{
\begin{array}{cl}
  0, & \text{if }\frac{W_N^{(i)}(y_{1:N},\hat{u}_{1:{i-1}}|0)}{W_N^{(i)}(y_{1:N},\hat{u}_{1:{i-1}}|1)}\geq 1, \\
  1,& \text{otherwise.}
\end{array}
\right.\nonumber
\end{align}
\cite{Arikan:Channel08} proved that by adopting an SC decoder, polar coding achieves any rate $R<I(W)$ with an error scaling as $O(2^{-N^{\beta}})$, where $\beta<1/2$. Moreover, the encoding and decoding complexity of polar coding are both $O(N\log N)$, where $N$ is the length of codeword.

To summarize, polar codes have excellent rate and complexity, but the code design is sensitive to the channel state information at the transmitter. For example, the choices of ``good'' indices depend on the crossover probability $p$ of a BSC, i.e., the set $\mathcal{A}(p)$ is a function of the value of $p$. Thus, it is not straight-forward to design capacity-achieving polar coding schemes if there is uncertainty in the channel parameters. In particular, for a block fading scenario, if the channel states are not known a priori at the transmitter, one may not know which indices should correspond to ``good'' channel instances. As detailed in the next section, the proposed scheme solves this problem via a hierarchical code design.


\section{Polar Coding for Fading Binary Symmetric Channel}
\label{sec:FadingBSC}
\subsection{System Model}

Fading channels characterize the wireless communication channels, where the channel states vary over channel uses. Fading coefficients typically vary much slower than transmission symbol duration in practice. For such cases, a block fading model is considered, wherein the channel state is assumed to be a constant over each coherence time interval, and follows a stationary ergodic process across fading blocks. For such a block fading model, we consider the practical scenario where the channel state information is available only at the decoder (CSI-D)~\cite[pages 186-187]{Tse:Wireless05}, while the transmitter is assumed to know only the statistics of the channel states.

Binary symmetric channel (BSC) is a channel with binary input $X$, binary noise $Z$, and a binary output $Y=X\oplus Z$. Here, for the fading BSC, the channel noise is a Bernoulli distributed random variable, where its statistics depend on the channel states. For the block fading BSC considered in this work, the channel is modeled as follows.
\begin{equation}
Y_{b,i}=X_{b,i}\oplus Z_{b,i},\quad b=1,\ldots,B,\quad i=1,\ldots,N,\label{fun:channel_definition}
\end{equation}
where $N$ is the block length, and $B$ is the number of fading blocks. Here, $Z_{b,i}$ are assumed to be identically distributed within a block and follow an i.i.d. fading process over blocks. That is,
if we consider fading BSC with $S$ states, with probability $q_s$ the parameter $p_s$ is chosen for the fading block $b$, where the channel noise $Z_{b,i}$ is sampled from a Bernoulli random variable with parameter $p_s$ for all $i\in\{1,\ldots,N\}$. Here, $1\leq s\leq S$ and $\sum\limits_{s=1}^S q_s=1$.

In wireless communications, the fading binary symmetric channel is utilized to model a fading AWGN channel with BPSK modulation and demodulation. In particular, for a fading AWGN channel with input power constraint $P_X$, the channel noise is distributed as i.i.d. Gaussian with variance $P_Z$, and the channel gain (the factor $h$ in the AWGN channel $Y=hX+Z$) remains the same statistic within a fading block, and follows an ergodic process over different blocks.
After utilizing the BPSK modulation and demodulation at the encoder and decoder, respectively, the equivalent channel is a binary input and binary output channel, with transition probability relating to AWGN channel state.
More precisely, if the channel gain for a particular fading block $b$, $h_{b,i}\;\forall\; i$, is equal to $h_s$ with probability $q_s$ for some $s\in\{1,2,\ldots,S\}$, then the corresponding binary noise in the equivalent fading BSC has the statistics of
\begin{equation}
p_s\triangleq\text{Pr}\{Z_{b,i}=1\}=1-\Phi(h_s\sqrt{\text{SNR}}),\label{fun:p_1}
\end{equation}
where $\Phi(\cdot)$ is the CDF of normal distribution and $\text{SNR}$ is the signal-to-noise ratio, i.e. $\text{SNR}=P_X/P_Z$. In other words, the channel at each fading block can be modeled as $W_s\triangleq$BSC$(p_s)$ with probability $q_s$.

The ergodic capacity of a fading binary symmetric channel is given by \cite[pages 584-586]{Gamal:Network11}
\begin{equation}
C_{\text{CSI-D}}=\sum_{s=1}^S q_s[1-H(p_s)],\label{fun:fading_capacity}
\end{equation}
where $H(\cdot)$ is the binary entropy function, and CSI-D refers to channel state information at the decoder. Note that, the ergodic capacity of fading BSC is an average over the capacities of all possible channels corresponding to different channel states. In this section, we propose a polar coding scheme that achieves the capacity of this fading BSC with low encoding and decoding complexity, without having instantaneous channel state information at the transmitter. Towards this end, we first focus on a fading BSC with two channel states, and then generalize our results to arbitrary finite number of channel states.

\subsection{Intuition}

In polar coding for a BSC, we see that the channel can be polarized by transforming a set of independent copies of given channels into a new set of channels whose symmetric capacities tend to $0$ or $1$ (for all but a vanishing fraction of indices). Towards applying such a polarization phenomenon to fading BSC, we first focus on how two binary symmetric channels polarize at the same time. We summarize a result given in \cite{Korada:Thesis09} regarding the polarization of degraded channels.

\begin{lemma}[\cite{Korada:Thesis09}]
For two binary symmetric channels $W_1\triangleq\text{BSC}(p_1)$ and $W_2\triangleq\text{BSC}(p_2)$, if $W_1$ is degraded with respect to $W_2$, i.e. $p_1\geq p_2$, then for any channel index $i\in\{1,\ldots,N\}$, the reconstructed channels after polarization have the relationship that $W_{1,N}^{(i)}$ is degraded with respect to $W_{2,N}^{(i)}$, and hence $I(W_{1,N}^{(i)})\leq I(W_{2,N}^{(i)})$.
\end{lemma}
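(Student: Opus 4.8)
The plan is to reduce the claim to behavior under a single polar transform step and then induct over the $n=\log_2 N$ levels of the recursion. The organizing notion is \emph{stochastic degradedness}: write $W_1\preceq W_2$ if there is a channel $Q:\mathcal{Y}\to\mathcal{Y}$ with $W_1(y|x)=\sum_{y'}Q(y|y')W_2(y'|x)$ for all $x,y$. For binary symmetric channels the hypothesis $p_1\geq p_2$ (with $p_1,p_2\leq 1/2$) delivers exactly this relation: taking $Q=\text{BSC}(p)$ with $p=(p_1-p_2)/(1-2p_2)$ gives $\text{BSC}(p_1)=Q\circ\text{BSC}(p_2)$, so the base relation $W_1\preceq W_2$ holds with a concrete degrading map $Q$, and $p\in[0,1/2]$ whenever $p_1\geq p_2$.

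First I would prove the \emph{preservation lemma}: if $W_1\preceq W_2$ with degrading channel $Q$, then one polarization step preserves degradedness, i.e. $W_1^-\preceq W_2^-$ and $W_1^+\preceq W_2^+$, where
\begin{equation}
W^-(y_1,y_2|u_1)=\sum_{u_2}\frac{1}{2}W(y_1|u_1\oplus u_2)W(y_2|u_2),\nonumber
\end{equation}
\begin{equation}
W^+(y_1,y_2,u_1|u_2)=\frac{1}{2}W(y_1|u_1\oplus u_2)W(y_2|u_2).\nonumber
\end{equation}
The candidate degrading channel in both cases is the product map $Q\otimes Q$ acting coordinatewise on $(y_1,y_2)$, together with the identity on the extra output symbol $u_1$ in the $+$ case. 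Substituting $W_1(y|x)=\sum_{y'}Q(y|y')W_2(y'|x)$ into these definitions and interchanging the finite sums, the factor $\frac{1}{2}$ and the sum over $u_2$ pass through unchanged and each output coordinate decouples, so $Q\otimes Q$ reproduces $W_1^{\pm}$ from $W_2^{\pm}$ exactly.

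Next I would run the induction. Each coordinate channel $W_N^{(i)}$ arises from $W$ by a fixed sequence of $n$ basic transforms (a string of $-$ and $+$ operations read off from the index), and this sequence is identical for $W_1$ and $W_2$ because the transform itself does not depend on the crossover probability. Applying the preservation lemma once per level yields $W_{1,N}^{(i)}\preceq W_{2,N}^{(i)}$ for every $i$. The mutual-information inequality then follows from the data-processing inequality: degradedness means that under a uniform input $X$ the variables form a Markov chain $X\to Y_2'\to Y_1'$, so $I(W_{1,N}^{(i)})=I(X;Y_1')\leq I(X;Y_2')=I(W_{2,N}^{(i)})$.

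I expect the preservation lemma to be the crux, with the $-$ transform the delicate case: the sum over $u_2$ couples the two channel copies, so one must check that applying the \emph{same} map $Q$ independently to each output is consistent with this coupling. The algebra closes precisely because the coupling lives in the inputs ($u_1\oplus u_2$ and $u_2$) while $Q$ acts only on the outputs, allowing the two coordinate sums to factor. The remaining care is bookkeeping: confirming that $Q\otimes Q$ (with the retained $u_1$) is a genuine stochastic kernel at each level, and that the base-case $Q$ is a legitimate BSC, both of which hold under $p_1\geq p_2$.
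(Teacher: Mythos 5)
Your proof is correct: the base-case degrading kernel $Q=\text{BSC}\bigl((p_1-p_2)/(1-2p_2)\bigr)$, the one-step preservation of degradedness under both the $-$ and $+$ transforms via the product map $Q\otimes Q$ (with identity on the retained $u_1$), the induction over the recursive $-/+$ structure indexing $W_N^{(i)}$, and the concluding data-processing inequality are all sound. Note that the paper does not prove this lemma at all --- it imports it by citation from Korada's thesis --- and your argument is essentially the standard one given in that reference, so your proposal matches the intended proof rather than offering a genuinely different route.
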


That is, when polarizing two binary symmetric channels, the reconstructed channels of the degraded channel have lower symmetric rate compared to that of the other channel. This statement also implies that
$$
\mathcal{A}_1\subseteq \mathcal{A}_2,
$$
where $\mathcal{A}_1$ and $\mathcal{A}_2$ denote the information sets of the degraded and superior channels, respectively. This relationship is illustrated in Fig.~\ref{fig:Polarization}. Based on this observation, when polarizing two BSCs, the channel indices after reordering permutation $\pi$ can be divided into three categories (we assume that channel $1$ is degraded, i.e., $p_1\geq p_2$):
\begin{enumerate}
\item Set $\mathcal{G}$: both channels are good, i.e.,
$$I(W_{1,N}^{(\pi(i))})\to 1,\quad I(W_{2,N}^{(\pi(i))})\to 1.$$
\item Set $\mathcal{M}$: only channel $2$ is good, while channel $1$ is bad, i.e.,
$$I(W_{1,N}^{(\pi(i))})\to 0,\quad I(W_{2,N}^{(\pi(i))})\to 1.$$
\item Set $\mathcal{B}$: both channels are bad, i.e.,
$$I(W_{1,N}^{(\pi(i))})\to 0,\quad I(W_{2,N}^{(\pi(i))})\to 0.$$
\end{enumerate}

\begin{figure}[t!]
 \centering
 \includegraphics[width=0.8\columnwidth]{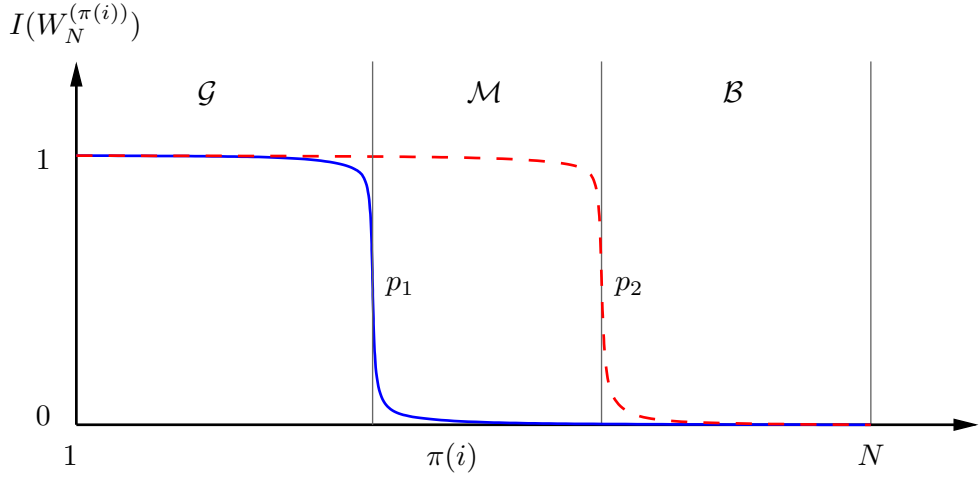}
 \caption{{\bf Illustration of polarizations for two binary symmetric channels.} The blue-solid line represents the degraded channel with transition probability $p_1$, and the red-dashed one represents the superior channel with $p_2$ ($p_1\geq p_2$). Values of $I(W_N^{(\pi(i))})$, the reordered symmetric mutual information, are shown for both channels.
}
\label{fig:Polarization}
\end{figure}

We have the following relationships between these sets. First, information sets for two channels are given by  $\mathcal{A}_1=\mathcal{G}$, and $\mathcal{A}_2=\mathcal{G}\cup \mathcal{M}$. Moreover, considering the sizes of these sets, we have:
\begin{align}
&|\mathcal{G}|=|\mathcal{A}_1|=N[1-H(p_1)-\epsilon],\label{fun:size_G}\\
&|\mathcal{M}|=|\mathcal{A}_2|-|\mathcal{A}_1|=N[H(p_1)-H(p_2)],\label{fun:size_M}\\
&|\mathcal{B}|=N-|\mathcal{A}_2|=N[H(p_2)+\epsilon],\label{fun:size_B}
\end{align}
where $\epsilon$ is an arbitrarily small positive number (that vanishes as $N\to\infty$).

For a fading binary symmetric channel, we again utilize Fig.~\ref{fig:Polarization} to illustrate our coding scheme. Here, consider a fading BSC with only two fading states, the degraded state and the superior one (denoted as state $1$ and $2$, respectively). If channel is in state $1$, which happens with probability $q_1$, the fading channel polarizes to the blue-solid curve, and otherwise the channel is in state $2$, which happens with the probability $q_2=1-q_1$, and the fading channel polarizes to the red-dashed curve. Hence, the reconstructed channel with index in set $\mathcal{G}$ always polarizes to a good one, i.e., its symmetric mutual information is close to $1$ no matter what the fading state is. And, the reconstructed channel with index in set $\mathcal{B}$ always polarizes to a bad one, i.e., its symmetric mutual information is close to $0$ no matter what the fading state is. Therefore, one can reliably transmit information for channel instances belonging to $\mathcal{G}$, whereas one may not transmit any information for channel instances belonging to $\mathcal{B}$. The novel part of the proposed coding scheme is for the middle region, i.e., coding over the set $\mathcal{M}$, where reconstructed channels polarize differently depending on the channel states. Since we consider the transmitter has no prior knowledge of channel states before transmitting, coding over channels with indices in $\mathcal{M}$ is challenging. At this point, we observe that for these channels, with probability $q_2$ they are nearly noiseless, and with probability $q_1$ they are purely noisy. Thus, each channel can be modeled as a binary erasure channel (BEC) from the viewpoint of blocks, where the erasure probability is equal to $q_1$. Here, we denote this channel as
$$\tilde{W}\triangleq\text{BEC}(q_1).$$
This observation motivates our design of hierarchical encoder and decoder for fading BSCs.

\subsection{Encoder}

The encoding process has two phases, hierarchically using polar codes to generate $NB$-length codewords, where $N$ is blocklength and $B$ is the number of blocks.
\subsubsection{Phase 1 (BEC Encoding)} In this phase, we generate $|\mathcal{M}|$ number of BEC polar codes, each with length $B$.
Consider a set of blockwise messages $v^{(k)}$ with $k\in\{1,\ldots,|\mathcal{M}|\}$. For every $v^{(k)}$, construct polar codeword $\tilde{u}^{(k)}$, which is formed by the $G_B$-coset code with parameter $(B,|\tilde{\mathcal{A}}|,\tilde{\mathcal{A}},0)$, where $\tilde{\mathcal{A}}$ is the information set for $\tilde{W}=\text{BEC}(q_1)$, and we choose the rate to be optimal, i.e.
\begin{equation}
|\tilde{\mathcal{A}}|=(1-q_1-\epsilon)B.\label{fun:size_A}
\end{equation}
In other words, we construct a set of polar codes, where each code corresponds to an index in set $\mathcal{M}$, with the same rate $1-q_1-\epsilon$, the same information set $\tilde{\mathcal{A}}$, and the same frozen values, $0$. More precisely, if we denote the reordering permutation for $\tilde{W}$ as $\tilde{\pi}$, then
    \begin{align}
    &\tilde{\pi}(v^{(k)})=[v^{(k)}_1,\ldots,v^{(k)}_{|\tilde{\mathcal{A}}|},0,\ldots,0],\label{fun:code_phase1}\\
    &\tilde{u}^{(k)}=v^{(k)}G_B.\label{fun:encoding_phase1}
    \end{align}

\subsubsection{Phase 2 (BSC Encoding)} In this phase, we generate $B$ number of BSC polar codes, each with length $N$.
Consider a set of messages $u^{(b)}$ with $b\in\{1,\ldots, B\}$. For every $u^{(b)}$, construct polar codeword $x^{(b)}$, which is $G_N$-coset code with parameter $(N,|\mathcal{G}|,\mathcal{G},u^{(b)}_{\mathcal{G}^c})$, where $\mathcal{G}$ is BSC information set with size given by (\ref{fun:size_G}). Remarkably, we do not set all non-information bits to be $0$, but transpose the blockwise codewords generated from Phase 1 and embed them into the messages of this phase. More precisely, if denote the reordering permutation operator of BSC as $\pi$, then
    \begin{align}
    &\pi(u^{(b)})=[u^{(b)}_1,\ldots,u^{(b)}_{|\mathcal{G}|},\tilde{u}^{(1)}_{b},\ldots,\tilde{u}^{(|\mathcal{M}|)}_{b},0,\ldots,0],\label{fun:code_phase2}\\
    &x^{(b)}=u^{(b)}G_N.\label{fun:encoding_phase2}
    \end{align}
By collecting all $\{x^{(b)}\}_{1:B}$ together, the encoder outputs a codeword with length $NB$. We equivalently express these codewords by a $B\times N$ matrix. The proposed encoder for fading binary symmetric channel is illustrated in Fig.~\ref{fig:Encoder}.

\begin{figure*}[t]
 \centering
 \includegraphics[scale=0.7]{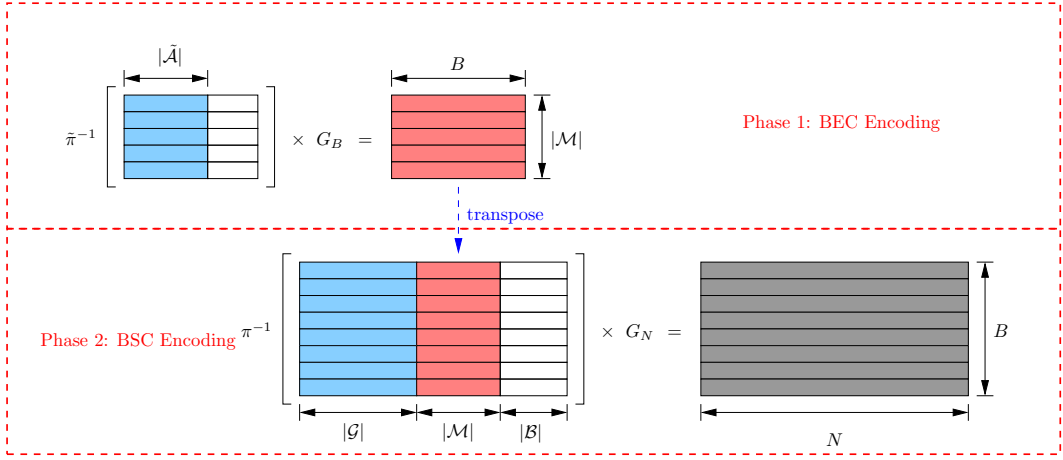}
 \caption{{\bf Illustration of proposed polar encoder for a fading binary symmetric channel with two states.} Bits in blue are information bits, and those in white are frozen as zeros. The codewords generated from Phase 1 are transposed and embedded into the messages of Phase 2 to generate the final codeword of length $NB$. $\tilde{\pi}$ and $\pi$ are column reordering permutations with respect to BEC and BSC, correspondingly.
}
\label{fig:Encoder}
\end{figure*}

\subsection{Decoder}

After receiving the sequence $y_{1:{NB}}$ from the channel, the decoder's task is to make estimates $\{\hat{v}^{(k)}\}_{1:{|\mathcal{M}|}}$ and $\{\hat{u}^{(b)}\}_{1:{B}}$, such that the information bits in both sets of messages match the ones at the transmitter with high probability. Rewrite channel output $y_{1:{NB}}$ as a $B\times N$ matrix, with row vectors $\{y^{(b)}\}_{1:B}$. As that of the encoding process, the decoding process also works in phases:

\subsubsection{Phase 1 (BSC Decoding I)} In this phase, we decode part of the output blocks using the BSC SC decoder with respect to the superior channel state.
More precisely, since the receiver knows channel states, it can adopt the correct SC decoder (BSC$(p_2)$ SC decoder in this case) to obtain $\hat{u}^{(b)}$ from $y^{(b)}$ for every $b$ corresponding to the superior channel state. To this end, the ``\emph{BSC Decoder I}'' (for block $b$ with the superior fading state) in this phase is described as follow:
    \begin{align}
\hat{u}^{(b)}_i \triangleq \left\{
\begin{array}{ll}
  0, & \text{if }i\in\mathcal{B}, \\
  d_{2,i}(y^{(b)}_{1:N},\hat{u}^{(b)}_{1:i-1}), & \text{if }i\in \mathcal{G}\cup\mathcal{M},
\end{array}
\right.\nonumber
\end{align}
\indent in the order $i$ from $1$ to $N$, where
\begin{align}
d_{2,i}(y^{(b)}_{1:N},\hat{u}^{(b)}_{1:i-1})\triangleq \left\{
\begin{array}{ll}
  0, & \text{if }\frac{W_{2,N}^{(i)}(y^{(b)}_{1:N},\hat{u}^{(b)}_{1:i-1}|0)}{W_{2,N}^{(i)}(y^{(b)}_{1:N},\hat{u}^{(b)}_{1:i-1}|1)}\geq 1, \\
  1,& \text{otherwise.}
\end{array}
\right.\nonumber
\end{align}

In this phase, one can reliably decode the information bits in blocks with respective to the superior channel states (with the knowledge of frozen symbols corresponding to $\mathcal{B}$ indices). For the blocks with the degraded channel states, information bits cannot be decoded reliably because frozen bits corresponding to set $\mathcal{M}$ is not known at the decoder. At this point, we use the next phase to decode these bits using a BEC SC decoder. For that, a $B\times |\mathcal{M}|$ matrix, $\hat{\tilde{\bold{U}}}$, is constructed by choosing each row as $\hat{u}^{(b)}$ with respective to bits in set $\mathcal{M}$ for the superior channel states. The other elements in this matrix, i.e., the symbols corresponding to the degraded channel states, are set to erasures.

\subsubsection{Phase 2 (BEC Decoding)} In this phase, we decode the frozen bits with respect to the degraded states using BEC SC decoders. More precisely, each column of matrix $\hat{\tilde{\bold{U}}}$, denoted by $\hat{\tilde{u}}^{(k)}$ for $k\in\{1,\ldots,|\mathcal{M}|\}$, is considered as the input to the decoder, and the receiver aims to decode $\hat{v}^{(k)}$ from $\hat{\tilde{u}}^{(k)}$ using the SC decoder with respect to channel $\tilde{W}=\text{BEC}(q_1)$. More formally, the ``\emph{BEC Decoder}'' in this phase is expressed by the following:

\begin{align}
\hat{v}^{(k)}_j \triangleq \left\{
\begin{array}{ll}
  0, & \text{if }j\in\tilde{\mathcal{A}}^c, \\
  \tilde{d}_j(\hat{\tilde{u}}^{(k)}_{1:|\mathcal{M}|},\hat{v}^{(k)}_{1:j-1}), & \text{if }j\in \tilde{\mathcal{A}},
\end{array}
\right.\nonumber
\end{align}
for $j$ from $1$ to $B$, where
\begin{align}
\tilde{d}_j(\hat{\tilde{u}}^{(k)}_{1:|\mathcal{M}|},\hat{v}^{(k)}_{1:j-1})\triangleq \left\{
\begin{array}{cl}
  0, & \text{if }\frac{\tilde{W}_{N}^{(j)}(\hat{\tilde{u}}^{(k)}_{1:|\mathcal{M}|},\hat{v}^{(k)}_{1:j-1}|0)}{ \tilde{W}_{N}^{(j)}(\hat{\tilde{u}}^{(k)}_{1:|\mathcal{M}|},\hat{v}^{(k)}_{1:j-1}|1)}\geq 1, \\
  1,& \text{otherwise.}
\end{array}
\right.\nonumber
\end{align}
After Phase 2, the decoder outputs an $|\mathcal{M}|\times B$ matrix $\hat{\bold{V}}$ with rows $\{\hat{v}^{(k)}\}_{1:|\mathcal{M}|}$. Moreover, the decoder can reconstruct all bits erased in matrix $\hat{\tilde{\bold{U}}}$, which is denoted as $\tilde{u}^{(b)}$ for each $b$ corresponding to the degraded channel state. Using this information, we are able to decode the information bits in blocks with the degraded channel state in the next phase.

\subsubsection{Phase 3 (BSC Decoding II)} In this phase, we decode the remaining blocks from Phase 1, using BSC SC decoders with respect to the degraded channel states. In particular, for each block $b$ in the degraded channel state, the receiver could decode $\hat{u}^{(b)}$ from $y^{(b)}$ using BSC$(p_1)$ SC decoder by setting frozen bit as $\tilde{u}^{(b)}_i$ for each $i\in\mathcal{M}$ and $0$ for each $i\in\mathcal{B}$. More formally, we have the ``\emph{BSC Decoder II}'' (for block $b$ with a degraded fading state) described as:
\begin{align}
\hat{u}^{(b)}_i \triangleq \left\{
\begin{array}{ll}
  0, & \text{if }i\in\mathcal{B},\\
  \tilde{u}^{(b)}_i, &\text{if }i\in\mathcal{M},\\
  d_{1,i}(y^{(b)}_{1:N},\hat{u}^{(b)}_{1:i-1}), & \text{if }i\in \mathcal{G},
\end{array}
\right.\nonumber
\end{align}
\indent in the order $i$ from $1$ to $N$, where
\begin{align}
d_{1,i}(y^{(b)}_{1:N},\hat{u}^{(b)}_{1:i-1})\triangleq \left\{
\begin{array}{ll}
  0, & \text{if }\frac{W_{1,N}^{(i)}(y^{(b)}_{1:N},\hat{u}^{(b)}_{1:i-1}|0)}{W_{1,N}^{(i)}(y^{(b)}_{1:N},\hat{u}^{(b)}_{1:i-1}|1)}\geq 1, \\
  1,& \text{otherwise.}
\end{array}
\right.\nonumber
\end{align}

\begin{figure*}[t]
 \centering
 \includegraphics[scale=0.7]{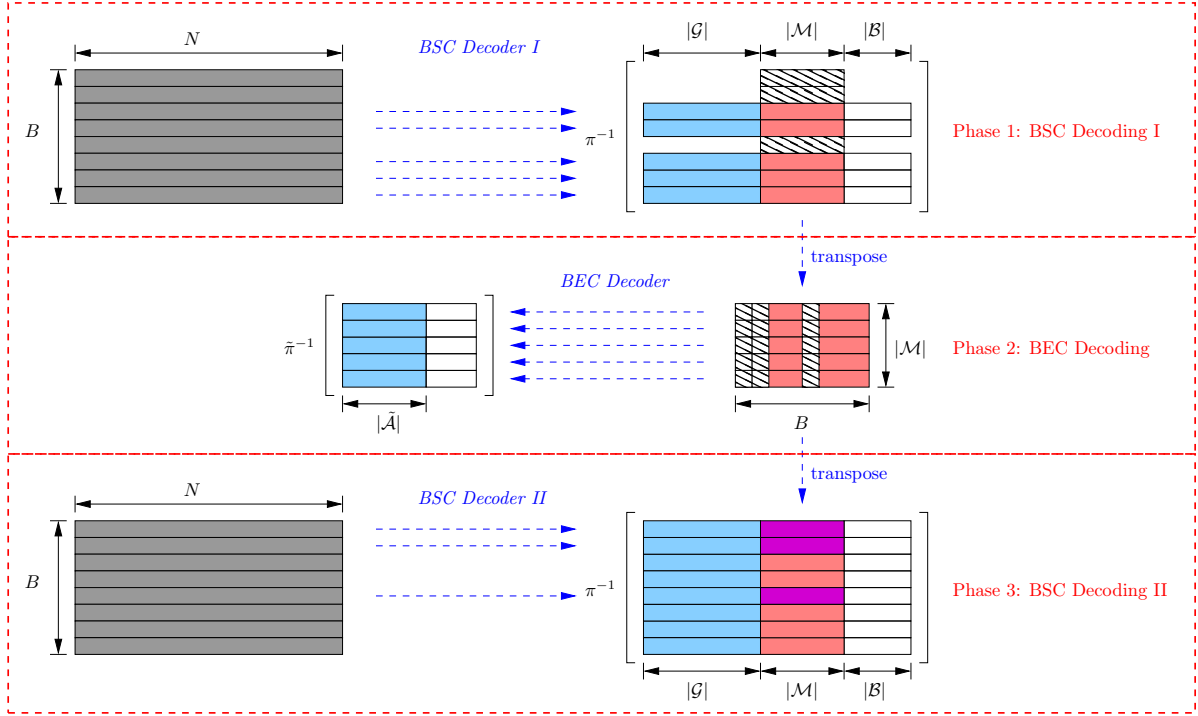}
 \caption{{\bf Illustration of proposed polar decoder for a fading binary symmetric channel with two states.} In Phase 1, decoder outputs all estimates using BSC SC decoders corresponding to the superior channel state. Selected columns are transposed and delivered as inputs to next phase, by adding all-erasures rows for blocks with the degraded channel state. In Phase 2, the decoder continues to use BEC SC decoders to decode all the blockwise information bits, and to recover all erased bits in shade. In Phase 3, the BSC SC decoders corresponding to the degraded channel state are utilized to decode the remaining information bits, by taking values of frozen bits in set $\mathcal{M}$ as the decoded results from the previous phase. $\tilde{\pi}$ and $\pi$ are column reordering permutations with respect to BEC and BSC, correspondingly.
}
\label{fig:Decoder}
\end{figure*}

The whole decoding process for fading binary symmetric channel is illustrated in Fig.~\ref{fig:Decoder}.

\subsection{Performance}

Here, we summarize the performance of the proposed polar coding scheme. Intuitively, by using BSC SC decoders corresponding to the superior channel state, the output from Phase 1 successfully recovers all information bits, because the size of information set is equal to the size of $\mathcal{G}$. Then, for decoding at Phase 2, the input vector $\hat{\tilde{u}}^{(k)}$ can be considered as a $q_1$-fraction erased polar codeword, hence, BEC SC decoder can decode all information bits in $v^{(k)}$ correctly for all $k\in\{1,\ldots,|\mathcal{M}|\}$, and recover the erased entries correctly as well. Finally, in Phase 3 of decoding, the bits in $\mathcal{M}$ have the correct frozen values, and by adopting BSC SC decoders corresponding to the degraded channel state, all the remaining information bits can be decoded correctly.

Therefore, as long as the designed rates of polar codes do not exceed the corresponding channel capacities, all information bits in our proposed polar coding scheme are reliably decodable. Hence, we have the following theorem.

\begin{theorem}
The proposed polar coding scheme achieves any rate $R<C_{\text{CSI-D}}$, for sufficiently large $N$ and $B$, and the decoding error probability scales as $\max\{O(B2^{-N^{\beta}}),O(N2^{-B^{\beta}})\}$ with $\beta<1/2$. Moreover, the complexity of the encoding and decoding processes are both given by $O(NB\log(NB))$, where $N$ is the block length and $B$ is the number of blocks.
\end{theorem}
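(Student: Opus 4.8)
The plan is to verify the three assertions—achievability of any $R<C_{\text{CSI-D}}$, the stated error scaling, and the $O(NB\log(NB))$ complexity—in turn, establishing each first for the two-state model and then extending it to $S$ states. For the rate I would tally the information bits carried by the two encoding phases and normalize by the blocklength $NB$: Phase~2 places $|\mathcal{G}|$ bits in each of the $B$ blocks and Phase~1 places $|\tilde{\mathcal{A}}|$ bits in each of the $|\mathcal{M}|$ blockwise BEC codewords, so that
\begin{align}
R=\frac{B|\mathcal{G}|+|\mathcal{M}|\,|\tilde{\mathcal{A}}|}{NB}.\nonumber
\end{align}
Substituting the set sizes (\ref{fun:size_G}), (\ref{fun:size_M}) and (\ref{fun:size_A}) and using $q_2=1-q_1$, the right-hand side equals $q_1[1-H(p_1)]+q_2[1-H(p_2)]$ up to an $O(\epsilon)$ correction, which is exactly $C_{\text{CSI-D}}$ in (\ref{fun:fading_capacity}); letting $\epsilon\to0$ as $N,B\to\infty$ then yields any $R<C_{\text{CSI-D}}$. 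For $S$ states, the nested relation $\mathcal{A}_1\subseteq\cdots\subseteq\mathcal{A}_S$ partitions the indices into $\mathcal{G}=\mathcal{A}_1$, the $S-1$ middle regions $\mathcal{M}_s=\mathcal{A}_{s+1}\setminus\mathcal{A}_s$, and $\mathcal{B}=\mathcal{A}_S^c$; modeling each $\mathcal{M}_s$ as a $\text{BEC}$ with erasure probability $\sum_{s'\le s}q_{s'}$ and applying an Abel (telescoping) summation to the resulting bit counts collapses the rate to $\sum_{s}q_s[1-H(p_s)]=C_{\text{CSI-D}}$.

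The error probability would be controlled by a chaining argument across the decoding phases, bounding the total error by $P(\text{Phase 1 error})+P(\text{Phase 2 error}\mid\text{Phase 1 correct})+P(\text{Phase 3 error}\mid\text{Phases 1,2 correct})$, so that each phase may be analyzed as a standard polar SC-decoding problem operating strictly below the relevant capacity. In Phase~1 each of the at most $B$ superior-state blocks is decoded by the $\text{BSC}(p_2)$ code whose information set $\mathcal{A}_2=\mathcal{G}\cup\mathcal{M}$ has rate $1-H(p_2)-\epsilon<I(W_2)$, giving error $O(2^{-N^\beta})$ per block and $O(B2^{-N^\beta})$ after a union bound. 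Conditioned on this phase being correct, the entries fed into Phase~2 are exactly the transmitted symbols on superior-state positions and erasures elsewhere; since blocks are i.i.d.\ degraded with probability $q_1$, each of the $|\mathcal{M}|\le N$ columns is an \emph{exact} $\text{BEC}(q_1)$ realization of a length-$B$ polar codeword of rate $1-q_1-\epsilon<I(\tilde{W})$, so SC decoding both recovers the message and reconstructs the erased coordinates with error $O(2^{-B^\beta})$ per column and $O(N2^{-B^\beta})$ overall. Finally, with the $\mathcal{M}$-coordinates now known, Phase~3 decodes each of the at most $B$ degraded-state blocks with the $\text{BSC}(p_1)$ code of rate $|\mathcal{G}|/N=1-H(p_1)-\epsilon<I(W_1)$, contributing another $O(B2^{-N^\beta})$. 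Summing the three terms yields the claimed $\max\{O(B2^{-N^\beta}),O(N2^{-B^\beta})\}$.

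The complexity claim follows by adding the costs of the constituent polar encoders and SC decoders: Phase~1 runs $|\mathcal{M}|\le N$ length-$B$ codes at cost $O(NB\log B)$ and Phase~2 runs $B$ length-$N$ codes at cost $O(NB\log N)$, and since $\log B+\log N=\log(NB)$ both encoding and decoding are $O(NB\log(NB))$. I expect the main obstacle to lie not in the two-state case but in making the $S$-state extension rigorous: one must verify that the $S-1$ middle regions can be decoded in the correct order (from the best state downward, each phase supplying the frozen bits needed by the next more-degraded state), that the BEC model for each $\mathcal{M}_s$ remains exact under this multi-level chaining, and that the randomness in the number of blocks per state does not break the union bounds—together with the care needed to ensure that the $O(\epsilon)$ rate gaps and the per-phase error terms accumulate additively to the stated bounds rather than compounding across levels.
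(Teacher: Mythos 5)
Your proposal is correct and follows essentially the same route as the paper's own proof: the rate is obtained by the same bit-count $\bigl(B|\mathcal{G}|+|\mathcal{M}||\tilde{\mathcal{A}}|\bigr)/(NB)$ with the set sizes (\ref{fun:size_G}), (\ref{fun:size_M}), (\ref{fun:size_A}), the error bound by the same per-phase polar error estimates combined via a union bound, and the complexity by summing the costs of the $|\mathcal{M}|$ length-$B$ and $B$ length-$N$ constituent codes. Your explicit conditioning across phases and your cautions about the $S$-state extension are sensible refinements of what the paper states tersely (the paper proves the theorem for two states and treats the $S$-state case in a separate generalization subsection), but they do not constitute a different approach.
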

\begin{proof}
The achievable rate (corresponding to the transmission of information bits in $v^{(k)}$ and $u^{(b)}$) is given by
\begin{align}
R   &=\frac{1}{NB}\Big\{|\mathcal{M}||\tilde{\mathcal{A}}|+B|\mathcal{G}|\Big\}\nonumber\\
    &=[H(p_1)-H(p_2)][1-q_1-\epsilon]+[1-H(p_1)-\epsilon]\nonumber\\
    &=q_1[1-H(p_1)]+q_2[1-H(p_2)]-\delta(\epsilon),\nonumber
\end{align}
where we have used (\ref{fun:size_G}), (\ref{fun:size_M}), (\ref{fun:size_A}), and
\begin{align}
\delta(\epsilon)\triangleq\epsilon[1+H(p_1)-H(p_2)]\to 0, \text{ as }\epsilon\to 0. \nonumber
\end{align}

The proof for error exponent is obtained by utilizing error bound from polar coding. In Phase 1 and 3 of decoding, the error probability of recovering $u^{(b)}$ correctly for each $b\in\{1,\ldots,B\}$ is given by
$P_{1,e}^{(b)}=O(2^{-N^{\beta}})$. Similarly, in decoding Phase 2, the error probability of recovering $v^{(k)}$ correctly for each $k\in\{1,\ldots,|\mathcal{M}|\}$ is given by
$P_{2,e}^{(k)}=O(2^{-B^{\beta}})$.
Hence, by union bound, the total decoding error probability is upper bounded by
\begin{align}
P_{e}\leq\sum_{b=1}^{B}P_{1,e}^{(b)}+\sum_{k=1}^{|\mathcal{M}|}P_{2,e}^{(k)}=O(B2^{-N^{\beta}})+O(N2^{-B^{\beta}}),\nonumber
\end{align}
as $N$ and $B$ tend to infinity. Therefore, $P_e$ vanishes if $B=o(2^{N^{\beta}})$ and $N=o(2^{B^{\beta}})$.

Finally, since we have $|\mathcal{M}|$ number of $B$-length polar codes as well as $B$ number of $N$-length polar codes utilized, the overall complexity of the coding scheme for both encoding and decoding is given by
\begin{equation}
|\mathcal{M}|\cdot O(B\log B)+B\cdot O(N\log N)=O(NB\log (NB)).\nonumber
\end{equation}
\end{proof}

This theorem shows that our proposed polar coding scheme achieves the capacity of fading BSC with low encoding and decoding complexity. In addition, the error scaling performance, which is inherited from polar codes, implies that long coherence intervals as well as large number of blocks are required for this coding scheme to make the error probability arbitrarily small.

\subsection{Generalization}

Here, we generalize the polar coding scheme to fading binary symmetric channel with arbitrary finite number of states. Consider $S$ number of BSCs, each with a different transition probability. Without loss of generality, consider $W_1\triangleq \text{BSC}(p_1),\ldots,W_S\triangleq\text{BSC}(p_S)$, with $p_1\geq p_2\geq\cdots\geq p_S$. Then, a fading BSC with $S$ fading states is modeled as the channel being $W_s$ with probability $q_s$ for a given fading block, where $\sum\limits_{s=1}^Sq_s=1$. The polarization of a fading BSC with $S$ fading states is illustrated in Fig.~\ref{fig:Polarization_S}, where the reconstructed channel indices are divided into $S+1$ sets after permutation $\pi$. In addition to $\mathcal{G}$ and $\mathcal{B}$, there exist $S-1$ middle sets $\mathcal{M}_1$, \ldots, $\mathcal{M}_{S-1}$ in this case. For each channel index in set $\mathcal{M}_s$, channels having statistics being one of $W_1,\ldots,W_s$ are polarized to be purely noisy and the remaining ones are noiseless. Therefore, for channel indices belonging to $\mathcal{M}_s$, we consider modeling them as BECs with erasure probability given by
$$e_s\triangleq\sum_{t=1}^s q_t,\quad 1\leq s\leq S-1.$$
Based on this, we have
\begin{align}
&|\mathcal{G}|=|\mathcal{A}_1|=N[1-H(p_1)-\epsilon],\nonumber\\
&|\mathcal{M}_s|=N\left[H(p_s)-H(p_{s+1})\right],\quad 1\leq s\leq S-1,\nonumber\\
&|\mathcal{B}|=N-|\mathcal{A}_S|=N[H(p_{S})+\epsilon].\nonumber
\end{align}
\begin{figure}[t]
 \centering
 \includegraphics[width=0.8\columnwidth]{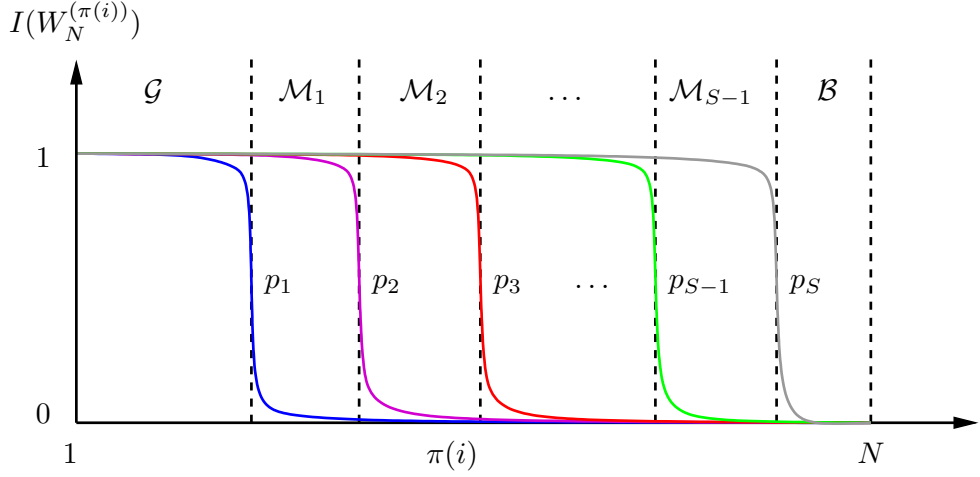}
 \caption{{\bf Illustration of polarization for a fading binary symmetric channel with $S$ channel states.} Besides $\mathcal{G}$ and $\mathcal{B}$, there are $S-1$ middle sets, denoted as $\mathcal{M}_1,\ldots,\mathcal{M}_{S-1}$.
}
\label{fig:Polarization_S}
\end{figure}

Here, the polarization result is similar to the case of two fading states, and we utilize a similar hierarchical coding scheme. In Phase 1 of encoding, transmitter generates $S-1$ sets of polar codes, where each one is a $G_B$-coset code with parameter $(B,|\tilde{\mathcal{A}}_s|,\tilde{\mathcal{A}}_s,0)$ with respect to BEC$(e_s)$, and all the encoded codewords are embedded into messages of Phase 2 in order. Then, in Phase 2 of encoding, we use BSC polar encoders with information set $\mathcal{G}$ to generate the final codeword with length $NB$. At the receiver end, we need $2S-1$ number of phases. Phase 1 utilizes the BSC$(p_S)$ SC decoders to decode blocks with respective to the best channel state (state $S$ in this case). Consider all decoded bits in $\mathcal{M}_{S-1}$, as well as adding erasures to undecoded blocks, we could decode all erased bits by using BEC$(e_{S-1})$ SC decoders in Phase 2. Then, using the decoded information as frozen values for blocks with respective to state $S-1$, BSC$(p_{S-1})$ SC decoders are adopted in Phase 3 to decode information bits in the blocks corresponding to channel state $S-1$. Recursively, all information bits for both BSC encoding and BEC encoding could be reliably decoded, as long as the designed rates of polar codes do not exceed corresponding channel capacities. Hence, by adopting this hierarchical polar coding scheme, the achievable rate is given by
\begin{align}
R   &=\frac{1}{NB}\Bigg\{B|\mathcal{G}|+\sum_{s=1}^{S-1}|\mathcal{M}_s||\tilde{\mathcal{A}}_s|\Bigg\}\nonumber\\
    &=[1-H(p_1)-\epsilon]+\sum_{s=1}^{S-1}[H(p_s)-H(p_{s+1})](1-e_s-\epsilon)\nonumber\\
    &=\sum_{s=1}^Sq_s[1-H(p_s)]-\delta'(\epsilon),\nonumber
\end{align}
where $\delta'(\epsilon)\triangleq\epsilon[1+H(p_1)-H(p_S)]$ tends to $0$ as $\epsilon\to0$. Thus, to this end, the proposed polar coding scheme achieves the capacity of channel, and the encoding and decoding complexities are both given by
\begin{equation}
\sum_{s=1}^{S-1}|\mathcal{M}_s|\cdot O(B\log B)+B\cdot O(N\log N)=O(NB\log (NB)),\nonumber
\end{equation}
which is independent to the value of $S$ as $\sum\limits_{s=1}^{S-1}|\mathcal{M}_s|\leq N$. For the same reason, the decoding error bound also remains the same as the case of only two fading states. Thus, our proposed polar coding scheme achieves the capacity of fading binary symmetric channel with arbitrary finite number of fading states, and the encoding and decoding complexity are both guaranteed to be tractable in practice.

Noting again the relevancy of this scenario to the fading AWGN channels, we consider another fading channel model with analog noise statistics in the next section, where the polar coding scheme proposed above is utilized.


\section{Polar Coding for Fading Additive Exponential Noise Channel}

\subsection{System Model}

We consider the fading additive exponential noise (AEN) channel given by
\begin{equation}\label{eq:AENChannel}
Y_{b,i}=X_{b,i}+Z_{b,i},\quad b=1,\ldots,B,\quad i=1,\ldots,N,
\end{equation}
where $X_{b,i}$ is channel input and restricted to be positive and with mean $E_X$; $N$ is block length; and $B$ is the number of blocks. In this model, $Z_{b,i}$ are assumed to be identically distributed within a block and follow an ergodic i.i.d. fading process over blocks. That is,
if we consider a fading AEN channel with $S$ states, then, with probability $q_s$ channel noise $Z_{b,i}$ is distributed as an exponential random variable with parameter $E_{Z_s}$ for a given $b$ and all $i\in\{1,\ldots,N\}$, i.e.,
\begin{equation}
f_{Z_{b,i}}(z) = \frac{1}{E_{Z_s}} e^{-\frac{z}{E_{Z_s}}},\quad z\geq 0,
\end{equation}
where $1\leq s\leq S$ and $\sum\limits_{s=1}^S q_s=1$.

We first state the following upper bound on the ergodic channel capacity in the high SNR regime.
\begin{lemma}\label{lem:Capacity_Fading_AEN}
The ergodic capacity of a fading AEN channel, with channel state information known at the decoder, is upper bounded as follows.
\begin{equation}
\lim\limits_{E_X\to\infty} C_{\textrm{CSI-D}}\leq \sum_{s=1}^S q_s\left[\log \left(1+\frac{E_X}{E_{Z_s}}\right)\right]\label{fun:fading_capacity}
\end{equation}
\end{lemma}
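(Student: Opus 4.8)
The plan is to reduce $C_{\text{CSI-D}}$ to a single-letter mutual-information maximization, push the maximization inside the average over fading states, and then bound each per-state term by a maximum-entropy argument. First, since the fading process is i.i.d.\ and its realization is available only at the decoder, the input distribution cannot be adapted to the state; treating the state $S$ as extra side information at the receiver gives the standard single-letter characterization
\begin{equation}
C_{\text{CSI-D}}=\max_{f_X:\,X\ge 0,\,E[X]\le E_X} I(X;Y\mid S)=\max_{f_X:\,X\ge 0,\,E[X]\le E_X}\sum_{s=1}^S q_s\,I(X;Y_s),\nonumber
\end{equation}
where $Y_s=X+Z_s$ and $Z_s$ is exponential with mean $E_{Z_s}$. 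The converse direction here is the usual Fano-based argument for a memoryless state-dependent channel with receiver CSI.

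Next, because a single input law $f_X$ must serve all states simultaneously, I would upper bound the maximum of the average by the average of the maxima,
\begin{equation}
C_{\text{CSI-D}}\le\sum_{s=1}^S q_s\max_{f_X:\,X\ge 0,\,E[X]\le E_X} I(X;Y_s),\nonumber
\end{equation}
thereby decoupling the bound into $S$ independent single-state AEN problems, each under the common constraint $E[X]\le E_X$.

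For each state I would then invoke the maximum-entropy property of the exponential distribution. Writing $I(X;Y_s)=h(Y_s)-h(Z_s)$ (translation invariance gives $h(Y_s\mid X)=h(Z_s)$), using $h(Z_s)=\log(eE_{Z_s})$, and noting that $Y_s\ge 0$ with $E[Y_s]\le E_X+E_{Z_s}$, the exponential maximizes differential entropy among nonnegative variables of a given mean, so $h(Y_s)\le\log\!\big(e(E_X+E_{Z_s})\big)$. This yields
\begin{equation}
\max_{f_X} I(X;Y_s)\le\log\left(1+\frac{E_X}{E_{Z_s}}\right),\nonumber
\end{equation}
which is precisely the single-state AEN capacity of \cite{Verdu:Exponential96}; averaging against $q_s$ gives the claimed bound.

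I expect the main difficulty to be conceptual rather than computational: carefully justifying that receiver-only CSI forbids adapting the input to the state (so that both the single-letter form and the exchange of maximization with the state average are legitimate), and handling the mean constraint as an inequality rather than an equality in the maximum-entropy step. It is worth noting that the resulting inequality in fact holds for every $E_X$; the $E_X\to\infty$ phrasing simply isolates the high-SNR regime in which the expansion-coding scheme developed below meets this upper bound, rendering it tight.
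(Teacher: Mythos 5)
Your proof is correct, but it takes a genuinely different route from the paper's. The paper first relaxes to full CSI at both ends, $C_{\textrm{CSI-D}}\leq C_{\textrm{CSI-ED}}$, where the encoder may allocate a different mean $E_{X_s}$ to each state subject to $\sum_s q_s E_{X_s}\leq E_X$; it then evaluates that capacity exactly (exhibiting the capacity-achieving mixed input distribution \eqref{eq:optinput}) and finally needs the $E_X\to\infty$ limit to argue that the cross-state power-allocation gain vanishes, recovering the equal-power expression. You instead stay inside the CSI-D characterization $\max_{f_X}\sum_s q_s I(X;Y_s)$, where a single input law serves all states, and swap the maximization with the state average, $\max_{f_X}\sum_s q_s I(X;Y_s)\leq \sum_s q_s \max_{f_X} I(X;Y_s)$; each per-state term is then bounded by the maximum-entropy argument, giving $\log\left(1+E_X/E_{Z_s}\right)$. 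Your relaxation is exactly the quantity $C_{\textrm{CSI-ED, MPB}}$ of the paper's Remark~3 (full power $E_X$ available in every state), which is a tighter intermediate bound than the paper's $C_{\textrm{CSI-ED}}$, and as a consequence your inequality holds at every finite $E_X$ with no limiting argument --- a slight strengthening of the lemma as stated, and consistent with the paper's use of $C_{\textrm{CSI-ED, MPB}}$ as the upper-bound curve at all SNR in the numerical section. What the paper's longer detour buys is the explicit parallel with Gaussian waterfilling (Remark~2), showing in what sense the cross-state allocation gain disappears at high SNR, and the explicit optimal input \eqref{eq:optinput}, which is reused later to motivate the choice of $p_l$ in \eqref{equ:AEN_p_l}; what your route buys is a shorter, non-asymptotic argument that needs only the inequality direction of the maximum-entropy property and never has to exhibit an optimizing input distribution. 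One point to keep careful in a full write-up: justify the single-letter formula $C_{\textrm{CSI-D}}=\max_{f_X}I(X;Y\mid S)$ under the block-fading (rather than i.i.d.-state) model and the average-over-blocks mean constraint --- concavity of $I(X;Y\mid S)$ in $f_X$ is what rules out any gain from deterministic (state-independent) power variation across blocks; only the converse direction is needed for the lemma, as you note.
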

\begin{proof}
Denote the channel state as a random variable $G$, which is discrete on set $\{1,2,\ldots,S\}$. If the channel state information is known at the decoder, then we have
\begin{align}
&\lim\limits_{E_X\to\infty} C_{\textrm{CSI-D}}\nonumber\\
&\quad\stackrel{(a)}{\leq} \lim\limits_{E_X\to\infty} C_{\textrm{CSI-ED}} \nonumber\\
&\quad\stackrel{(b)}{=}\lim\limits_{E_X\to\infty} \max_{\mathbb{E}[X]\leq E_X}I(X;Y|G)\nonumber\\
                &\quad=\lim\limits_{E_X\to\infty} \max_{\mathbb{E}[X]\leq E_X}h(Y|G)-h(Y|G,X)\nonumber\\
                &\quad=\lim\limits_{E_X\to\infty} \max_{X_s: \sum\limits_s q_s\mathbb{E}[X_s] \leq E_X}\sum_{s=1}^S q_s[h(X_s+Z_s)-h(Z_s)]\nonumber\\
		&\quad\stackrel{(c)}{=}\lim\limits_{E_X\to\infty} \max_{E_{X_s}: \sum\limits_s q_sE_{X_s} \leq E_X} \sum_{s=1}^S q_s \left[\log\left( 1+\frac{E_{X_s}}{E_{Z_s}} \right)\right]\nonumber\\
                &\quad\stackrel{(d)}{=}\sum_{s=1}^S q_s \left[\log\left( 1+\frac{E_X}{E_{Z_s}} \right)\right],\nonumber
\end{align}
where $(a)$ is due to upper bounding the channel capacity with the case where encoder also has CSI and adapts its coding according to the channel states; $(b)$ is the ergodic capacity of the channel where both encoder and decoder has CSI (see, e.g., \cite[pages 203-209]{Tse:Wireless05}); and $(c)$ holds as exponential distribution maximizes the differential entropy on positive support with a mean constraint \cite{Verdu:Exponential96} \cite[page 412]{Cover:IT1991}. Here, we choose $X_s$ to be a weighted sum of an exponential distribution with mean $E_{X_s}+E_{Z_s}$ and a delta function in order to make the output $X_s+Z_s$ to be exponentially distributed random variable. That is, the pdf of $X_s$ is given by
\begin{align}
f_{X_s}(x)= \frac{E_{X_s}}{E_{X_s}+E_{Z_s}} \frac{e^{-x/(E_{X_s}+E_{Z_s})}}{E_{X_s}+E_{Z_s}}u(x)+ \frac{E_{Z_s}}{E_{X_s}+E_{Z_s}} \delta(x),\label{eq:optinput}
\end{align}
where $\delta(x)=1$ if $x=0$, and $\delta(x)=0$ if $x\neq 0$; $u(x)=1$ if $x\geq 0$, and $u(x)=0$ if $x<0$.
Finally, $(d)$ follows by taking the limit.
\end{proof}

In the following, we show that our proposed polar coding scheme achieves the upper bound above in the high SNR regime.

\begin{remark} Note that the capacity of the fading AEN channel with CSI-D approaches to the bound above in the high SNR regime. (For example, our coding scheme, as shown below, provides one such achievable rate.) This observation is similar to the Gaussian counterpart~\cite[pages 203-209]{Tse:Wireless05}, where in the high SNR regime, the performance obtained from a waterfilling strategy - the optimal solution for the case where encoder can adapt its power based on the channel state, i.e., CSI-ED - approaches to the performance of utilizing the same power allocation for each fading channel.
\end{remark}

\begin{remark} The model above assumes a mean constraint on the channel input where the average is over channel blocks and channel states. If the mean constraint is per block (abbreviated as MPB - Mean Per fading Block - in the following), i.e., $E[X_{b,i}]\leq E_X$ for each fading block $b$, then by following steps similar to the ones above, we have
$$C_{\textrm{CSI-D, MPB}} \leq C_{\textrm{CSI-ED, MPB}} = \sum_{s=1}^S q_s \left[\log\left( 1+\frac{E_X}{E_{Z_s}} \right)\right].$$
\end{remark}

\subsection{Binary Expansion of Exponential Distribution}

In \cite{Ozan:Expansion12}, expansion coding scheme is proposed for a static AEN channel, where the channel is expanded by the decomposition property of exponential random variable. Here, a similar scheme is adopted for the fading AEN channel. We first start with the following lemma, providing the theoretical basis for expansion coding.
\begin{lemma}\label{lem:EXP_EXP}
Let $A_l$'s be independent Bernoulli random variables with parameters
given by $a_l$, i.e., $\text{Pr}\{A_l=1\}\triangleq a_l$,
and consider the random variable defined by
\begin{equation}
A\triangleq \sum\limits_{l=-\infty}^{\infty} 2^l A_l.
\end{equation}
Then, the random variable $A$ is exponentially distributed
with mean $\lambda^{-1}$, i.e., its pdf is given by
\begin{equation}
f_A(a) = \lambda e^{-\lambda a}, \quad a\geq 0,
\end{equation}
if and only if the choice of $a_l$ is given by
\begin{equation}
a_l = \frac{1}{1+e^{\lambda 2^l}}. \label{fun:q_l}
\end{equation}
\end{lemma}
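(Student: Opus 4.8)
The plan is to prove both directions by computing the characteristic function (or equivalently the moment generating function) of $A$ and matching it against that of an exponential random variable. The key structural observation is that $A = \sum_{l=-\infty}^\infty 2^l A_l$ is an infinite sum of \emph{independent} scaled Bernoulli variables, so the MGF of $A$ factorizes as an infinite product:
\begin{equation}
\mathbb{E}\!\left[e^{-s A}\right] = \prod_{l=-\infty}^\infty \mathbb{E}\!\left[e^{-s 2^l A_l}\right] = \prod_{l=-\infty}^\infty \left[(1-a_l) + a_l e^{-s 2^l}\right],\nonumber
\end{equation}
valid for $s$ in a suitable right half-plane. The target is the exponential MGF $\lambda/(\lambda + s)$. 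So the whole lemma reduces to the product identity
\begin{equation}
\prod_{l=-\infty}^\infty \left[(1-a_l) + a_l e^{-s 2^l}\right] = \frac{\lambda}{\lambda + s}.\nonumber
\end{equation}

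For the ``if'' direction, I would substitute the prescribed $a_l = 1/(1+e^{\lambda 2^l})$, so that $1-a_l = e^{\lambda 2^l}/(1+e^{\lambda 2^l})$ and each factor becomes $\bigl(e^{\lambda 2^l} + e^{-s 2^l}\bigr)/\bigl(1+e^{\lambda 2^l}\bigr)$. The natural route is to recognize the resulting infinite product as a telescoping identity in disguise: the dyadic scaling $2^l$ is exactly what makes such products collapse. Concretely, I would try to rewrite each factor using the algebraic identity $(1+x)(1+x^2)(1+x^4)\cdots = 1/(1-x)$ (the classic binary-expansion product), applied with $x$ chosen so that the $2^l$ exponents line up. Setting $t = e^{-(\lambda+s)}$ or a similar substitution should convert the product into the telescoping form $\prod_l (1 + t^{2^l})$ up to normalizing factors, whose value is a simple ratio that simplifies to $\lambda/(\lambda+s)$. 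The correct bookkeeping of the normalization constants $\prod_l 1/(1+e^{\lambda 2^l})$ against the analogous product at $s=0$ is where the computation must be done carefully.

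For the ``only if'' direction, uniqueness follows because the MGF determines the distribution, and because the factorization forces the parameters: if the product equals $\lambda/(\lambda+s)$ for all $s$ in an interval, then evaluating the functional relationship—or matching the product against the one produced by the prescribed $a_l$—pins down each $a_l$ uniquely, since the map $a_l \mapsto$ factor is injective and the independence structure leaves no freedom. Alternatively, one can argue via the self-similarity $A \stackrel{d}{=} 2A' + A_0$ where $A'$ has the same law with $\lambda$ rescaled; imposing that this fixed-point equation be satisfied by the exponential law yields a recursion that the $a_l$ must obey, and that recursion is solved uniquely by (\ref{fun:q_l}).

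I expect the main obstacle to be establishing \emph{convergence} of the doubly-infinite product and justifying the interchange of expectation with the infinite sum (so that the MGF genuinely factorizes over all $l$, including $l \to -\infty$). For $l \to +\infty$ the terms $a_l \to 0$ rapidly, but for $l \to -\infty$ one has $a_l \to 1/2$, and one must check that the contributions $2^l A_l$ with $l$ very negative sum to a well-defined finite random variable almost surely and that the tail of the product converges to a nonzero limit. Verifying this convergence, and the telescoping collapse of the product to the clean ratio $\lambda/(\lambda+s)$, is the technical heart of the argument; the rest is routine manipulation of Bernoulli MGFs.
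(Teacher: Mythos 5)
The paper itself does not prove this lemma: it defers to \cite{Ozan:Expansion12}, where the argument is the direct, measure-theoretic one (the binary digits of an exponential random variable are shown to be independent Bernoulli variables, with $\Pr\{A_l=1\}$ computed by summing the exponential density over the intervals on which the $l$-th digit equals one). Your ``if'' direction is a genuinely different, transform-based route, and it does go through: with $a_l=1/(1+e^{\lambda 2^l})$, multiplying numerator and denominator of each factor by $e^{-\lambda 2^l}$ gives
\[
(1-a_l)+a_l e^{-s2^l}=\frac{1+t^{2^l}}{1+r^{2^l}},\qquad t\triangleq e^{-(\lambda+s)},\quad r\triangleq e^{-\lambda},
\]
so the product over $l\geq 0$ collapses via the identity $\prod_{l\geq 0}(1+x^{2^l})=1/(1-x)$ to $(1-r)/(1-t)$, while the finite product over $-L_1\leq l\leq -1$ equals $\frac{1-t}{1-t^{2^{-L_1}}}\cdot\frac{1-r^{2^{-L_1}}}{1-r}$, which tends to $\frac{1-t}{1-r}\cdot\frac{\lambda}{\lambda+s}$ as $L_1\to\infty$; the normalizations cancel and the full product is $\lambda/(\lambda+s)$, the Laplace transform of the exponential law. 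The convergence issue you flag is real but routine for $s\geq 0$: the partial sums increase to $A$ almost surely (all summands are non-negative and $\sum_l 2^l a_l<\infty$), $e^{-s(\cdot)}$ is bounded by one, so dominated convergence applies, and independence factorizes every finite partial product. This buys a self-contained analytic proof of the ``if'' half, arguably cleaner than the digit-by-digit computation.

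The genuine gap is in your ``only if'' direction. The assertion that ``the map $a_l\mapsto$ factor is injective and the independence structure leaves no freedom'' is not an argument: injectivity of each individual factor map does not rule out two \emph{different} sequences $(a_l)$ producing the same infinite product, which is what you would need; and the self-similarity sketch is undeveloped (splitting off $A_0$ alone does not yield a clean fixed-point equation, since the negative levels do not rescale together with the positive ones). The standard repair is not transform-based: since the exponential distribution assigns zero probability to dyadic rationals, the binary expansion of $A$ is almost surely unique, hence each $A_l$ coincides almost surely with the $l$-th binary digit of $A$, i.e., $A_l=\lfloor 2^{-l}A\rfloor \bmod 2$, and its parameter is forced to be
\[
a_l=\Pr\{\lfloor 2^{-l}A\rfloor\text{ odd}\}
=\sum_{k=0}^{\infty}\left(e^{-\lambda 2^l(2k+1)}-e^{-\lambda 2^l(2k+2)}\right)
=\frac{e^{-\lambda 2^l}\left(1-e^{-\lambda 2^l}\right)}{1-e^{-\lambda 2^{l+1}}}
=\frac{1}{1+e^{\lambda 2^l}}.
\]
With this replacement (or any equivalent uniqueness argument) your proof is complete; as written, the ``only if'' half is asserted rather than proved.
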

This lemma reveals that one can reconstruct an exponential random variable from a set of independent Bernoulli random variables perfectly.
The proof is given in \cite{Ozan:Expansion12}, and a set of typical numerical values of $a_l$s for $\lambda=1$ is shown in Fig.~\ref{fig:Exp_Prob}. It is evident that $a_l$ approaches $0$
for what we refer to as the ``higher'' levels and approaches $0.5$ for the ``lower'' levels. Hence, the primary non-trivial levels meaningful for coding are the ``middle'' ones. This observation provides the basis for truncating the number of levels to a finite value without a significant loss in performance.
\begin{figure}[t]
 \centering
 \includegraphics[width=0.8\columnwidth]{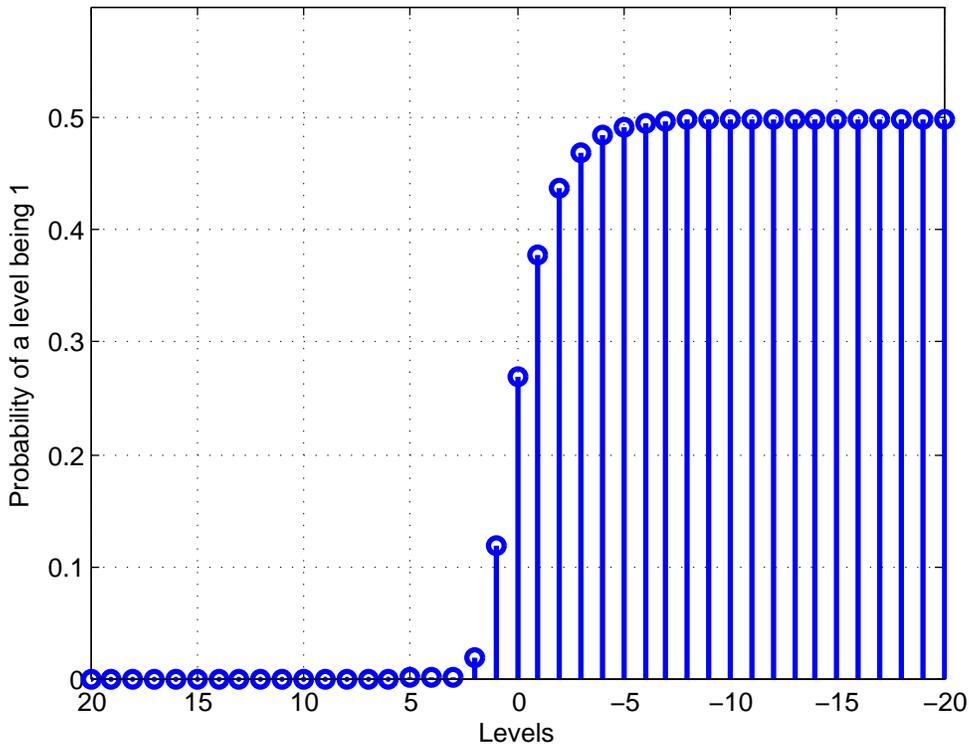}
 \caption{{\bf Numerical result.} A set of $a_l$s, the parameters from expanded levels, are shown, where the target random variable expanded from is exponentially distributed with $\lambda=1$.}
\label{fig:Exp_Prob}
\end{figure}

\subsection{Expansion Coding}

\begin{figure}[t]
 \centering
 \includegraphics[width=0.8\columnwidth]{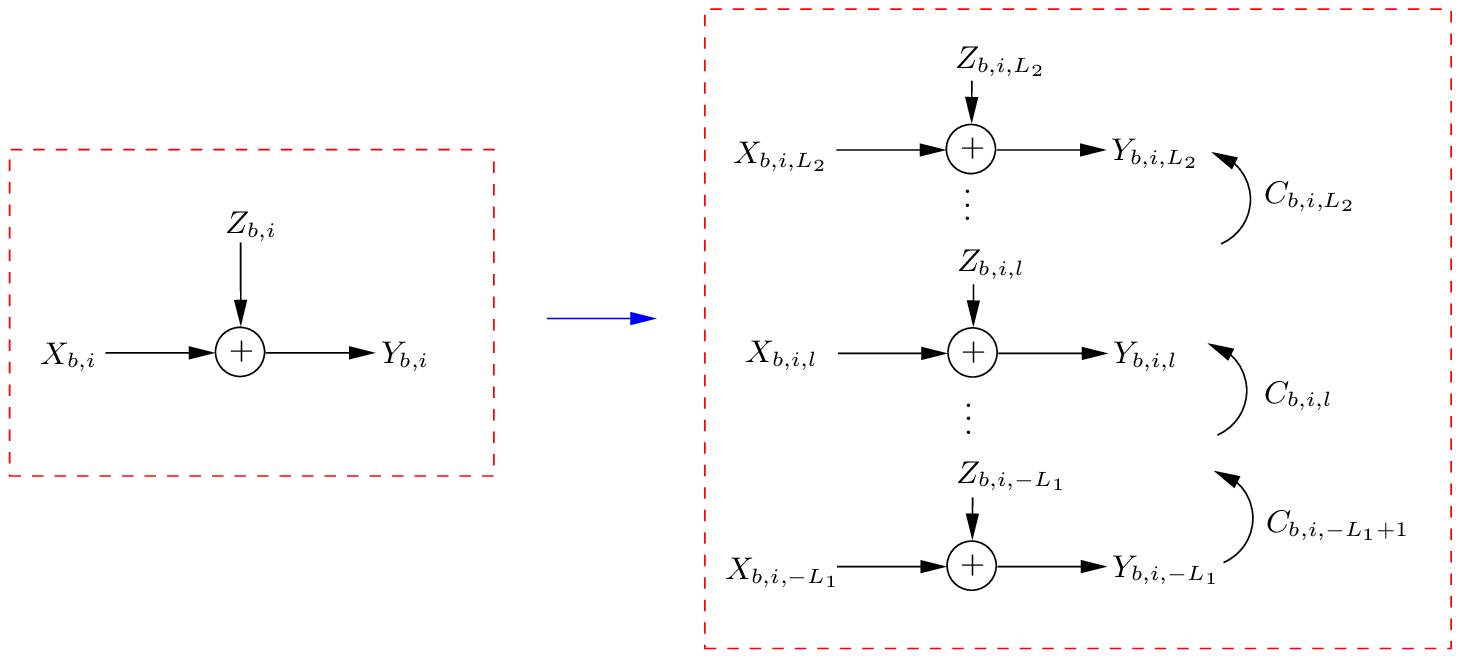}
 \caption{{\bf Illustration of expansion channel coding scheme.} An analog channel is expanded into a set of binary symmetric channels, where carries are considered between neighboring levels.
}
\label{fig:Expansion}
\end{figure}

We consider the binary expansion of channel noise
\begin{align}
\hat{Z}_{b,i} \triangleq \sum\limits_{l=-L_1}^{L_2} 2^{l} Z_{b,i,l},
\end{align}
where $Z_{b,i,l}$ is a discrete random variable taking value in $\{0,1\}$. However, the distribution of $Z_{b,i,l}$ depends on the fading state. More precisely, if the noise for a fading block $b$ is exponential with parameter $E_{Z_s}$, then $Z_{b,i,l}$ is a Bernoulli random variable with parameter
\begin{equation}
\tilde{p}_{l,s}\triangleq \text{Pr}\{Z_{b,i,l}=1\}=\frac{1}{1+e^{2^l/E_{Z_s}}}.
\end{equation}
Then, by Lemma \ref{lem:EXP_EXP}, $\hat{Z}_{b,i}\overset{d.}{\to} Z_{b,i}$ as $L_1$ and $L_2$ tend to infinity. In this sense, we approximate the original exponential noise perfectly by a set of discrete noises.

Similarly, we also expand channel input and output as follows,
\begin{align}
\hat{Y}_{b,i} \triangleq \sum\limits_{l=-L_1}^{L_2} 2^{l} Y_{b,i,l}= \sum\limits_{l=-L_1}^{L_2} 2^{l} (X_{b,i,l}+Z_{b,i,l}),
\end{align}
where $X_{b,i,l}$ is also a Bernoulli random variable with parameter $p_l\triangleq \text{Pr}\{X_{b,i,l}=1\}$.
At this point, we model the expanded channels as
\begin{equation}
Y_{b,i,l}=X_{b,i,l}+Z_{b,i,l},\quad l=-L_1,\ldots,L_2.
\end{equation}
Note that the summation is a real sum here, and hence, the channel is not a fading BSC for a given block.
If we replace the real sum by modulo-$2$ sum, then, at level $l$, any capacity achieving code for fading BSC, for example the one constructed in Section~\ref{sec:FadingBSC}, can be utilized over this channel with optimal
input probability distribution. In addition, instead of directly using the capacity achieving
code, one can consider its combination with the method of Gallager~\cite[pages 74-76]{Gallager:Information68} \cite{Korada:Thesis09} to
achieve a rate corresponding to the one obtained by the mutual
information $I(X_{b,l};Y_{b,l})$ evaluated with a desired input distribution on $X_{b,l}$.
Hence, a desired rate (evaluation of $I(X_{b,l};Y_{b,l})$ with some distribution on $X_{b,l}$) at level $l$ and fading block $b$ can be achieved.

However, due to the real sum in the original channel above, carries exist between neighboring levels (see Fig.~\ref{fig:Expansion}), which further implies that coding over levels are not independent. Hence, we do not have independent parallel channels to start with.
Every level, except for the lowest one, is impacted by carries accumulated from lower levels. \cite{Ozan:Expansion12} proposed a scheme to get rid of this issue, where carries are decoded from the lowest level to the highest one. In this way, channels over levels are transformed to
behave independently (assuming reliable decoding of each carry w.h.p.), and the total achievable rate is the summation of individual achievable rates over all levels.

Using this technique (to essentially remove carries) as suggested in \cite{Ozan:Expansion12}, each level could be modeled as a fading BSC. Thus, expansion coding reduces the problem of coding over a fading exponential noise channel into a set of simpler subproblems, coding over fading BSCs. By adopting capacity achieving polar coding scheme proposed in Section \ref{sec:FadingBSC}, we have the following achievable rate result for these channels.

\begin{theorem}\label{thm:AEN_optimization}
By decoding carries in expansion coding, and adopting hierarchical polar coding scheme for fading BSC in each expanded level, the proposed scheme achieves the rate given by
\begin{equation}
R=\sum_{l=-L_1}^{L_2}\sum_{s=1}^Sq_s[H(p_l\otimes \tilde{p}_{l,s})-H(\tilde{p}_{l,s})],\label{equ:achievable_rate}
\end{equation}
for any $L_1,L_2>0$, where $p_l\in[0,0.5]$ is chosen to satisfy
\begin{equation}
\sum_{l=-L_1}^{L_2}2^l p_l\leq E_X.
\end{equation}
\end{theorem}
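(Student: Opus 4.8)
The plan is to reduce the fading AEN channel to a family of fading binary symmetric channels, one per expansion level, and then apply the hierarchical scheme of Section~\ref{sec:FadingBSC} at every level. First I would expand input, noise, and output as in the preceding subsection, turning the real-sum relation $Y_{b,i}=X_{b,i}+Z_{b,i}$ into the per-level relations $Y_{b,i,l}=X_{b,i,l}+Z_{b,i,l}$ for $l=-L_1,\ldots,L_2$. The only coupling across levels is through the carries, which I would remove exactly as in \cite{Ozan:Expansion12}: decode the levels from the lowest ($l=-L_1$) upward, so that when level $l$ is reached the carry entering it from below is already known and can be subtracted. Conditioned on this successive carry removal succeeding, the modulo-$2$ reduction at level $l$ yields an \emph{independent} fading BSC in which, in fading state $s$, the crossover probability is $\tilde{p}_{l,s}$ while the transmitted symbol $X_{b,i,l}$ is Bernoulli with parameter $p_l$. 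Since the decoder has CSI, this is precisely the fading BSC studied in Section~\ref{sec:FadingBSC}, so I would run the hierarchical code of that section at each level.

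Next I would evaluate the rate contributed by a single level. For the reduced channel $Y_{b,i,l}=X_{b,i,l}\oplus Z_{b,i,l}$ with $X_{b,i,l}\sim\mathrm{Bern}(p_l)$ and $Z_{b,i,l}\sim\mathrm{Bern}(\tilde{p}_{l,s})$, the output bit is Bernoulli with parameter $p_l\otimes\tilde{p}_{l,s}$, so that
\[
I(X_{b,i,l};Y_{b,i,l}\mid G=s)=H(p_l\otimes\tilde{p}_{l,s})-H(\tilde{p}_{l,s}).
\]
Averaging over the fading statistics (the decoder knows $G$) gives the ergodic per-level rate $\sum_{s=1}^{S}q_s[H(p_l\otimes\tilde{p}_{l,s})-H(\tilde{p}_{l,s})]$, and summing over $l$ reproduces the claimed expression \eqref{equ:achievable_rate}. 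The input constraint is then immediate from linearity of expectation: the expanded input satisfies $\mathbb{E}[\hat{X}_{b,i}]=\sum_{l}2^l\mathbb{E}[X_{b,i,l}]=\sum_{l}2^l p_l$, and requiring this to be at most $E_X$ gives $\sum_{l=-L_1}^{L_2}2^l p_l\leq E_X$.

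The main obstacle is twofold, since Section~\ref{sec:FadingBSC} delivers the \emph{symmetric} capacity (uniform input) whereas each level requires the mutual information at the generally non-uniform law $\mathrm{Bern}(p_l)$, and since the carry-removal step must be justified. For the first point I would invoke the shaping argument of Gallager~\cite[pages 74-76]{Gallager:Information68} in its polar realization \cite{Korada:Thesis09}, which upgrades a symmetric-capacity-achieving code into one attaining $I(X_{b,l};Y_{b,l})$ for a prescribed input distribution; applied per fading state this recovers the per-level ergodic rate above while still requiring no CSI at the transmitter. For the second point I would note that the carry entering level $l$ is a deterministic function of the symbols at levels below $l$, hence known once those levels decode correctly; a union bound over the finitely many $L_1+L_2+1$ levels then shows the probability that any carry removal fails is within a constant factor of the per-level error probability, which vanishes in $N$ and $B$ by the fading-BSC theorem of Section~\ref{sec:FadingBSC}. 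Thus the levels act independently with high probability, their rates add, and the overall achievable rate equals \eqref{equ:achievable_rate}.
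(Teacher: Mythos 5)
Your proposal is correct and takes essentially the same route as the paper, whose justification for this theorem is the prose preceding it rather than a formal proof block: binary expansion into levels, bottom-up carry decoding as in \cite{Ozan:Expansion12} to make the levels independent, Gallager's shaping method \cite{Gallager:Information68}\cite{Korada:Thesis09} to upgrade the symmetric-capacity fading-BSC scheme to the non-uniform input $\mathrm{Bern}(p_l)$, and then summing the per-level, per-state mutual informations $H(p_l\otimes\tilde{p}_{l,s})-H(\tilde{p}_{l,s})$ weighted by $q_s$. If anything, your write-up is more explicit than the paper's (the union bound over the $L_1+L_2+1$ levels for carry decoding and the linearity-of-expectation derivation of the mean constraint are left implicit there), but the underlying argument is identical.
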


We note the followings. First, the achievable scheme we utilize satisfies the mean constraint on the channel input for each block, i.e., averaged over channel uses, $\lim\limits_{N\to\infty} \frac{1}{N} \sum\limits_{i=1}^N X_{i,b} \leq E_X$ for each block $b$. (This implies satisfying power constraint averaged over the blocks as well.) Secondly, the maximum rate from our coding scheme could be considered as an optimization problem over finite number of parameters $p_l$, $-L_1\leq l\leq L_2$. However, it is not clear how to solve this non-convex problem. Here, instead of searching for an optimal solution, we shift our focus to finding a sub-optimal choice of $p_l$ such that the achievable rate is close the optimal one in the high SNR regime. From \eqref{eq:optinput}, we observe that the optimal input distribution for the case with the CSI at the transmitter could be approximated with an exponential with parameter $E_{X_s}$ as $\textrm{SNR}=E_{X_s}/E_{Z_s}$ gets large. As we do not have CSI at the transmitter in our model, we consider choosing the same energy level, $E_{X}$, for each fading block. Noting again that the optimal input distribution is unknown for our fading model, the high SNR observation inspires us to choose $p_l$ as
\begin{equation}
p_l=\frac{1}{1+e^{2^l/E_X}}. \label{equ:AEN_p_l}
\end{equation}

The following theorem gives the main result of our polar coding scheme over fading AEN channel. (We accompany the proof of this theorem with Fig.~\ref{fig:AEN_Rate_Level} in order to provide not only the details of the proof but also the intuition on how expansion approach is helpful.)

\begin{theorem}\label{thm:AENrate}
For any positive constant $\epsilon<1$, if
\begin{itemize}
\item $L_1\geq -\log \epsilon-\min\limits_s \log E_{Z_s}$;
\item $L_2\geq -\log \epsilon+\log E_X$;
\item $\min\limits_s \textrm{SNR}_s\geq 1/\epsilon$, where $\text{SNR}_s\triangleq E_X/E_{Z_s}$,
\end{itemize}
then by decoding carries and adopting hierarchical polar codes at each fading BSC after expansion, the achievable rate $R$ given by \eqref{equ:achievable_rate}, with a choice of $p_l$ as \eqref{equ:AEN_p_l}, satisfies
$$R\geq \sum_{s=1}^{S}q_s\left[\log\left(1+\frac{E_X}{E_{Z_s}}\right)\right]-5\log e\cdot\epsilon.$$
\end{theorem}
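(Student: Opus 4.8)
The plan is to prove the bound one fading state at a time and then average. Since \eqref{equ:achievable_rate} has the form $R=\sum_{s=1}^S q_s R_s$ with $R_s\triangleq\sum_{l=-L_1}^{L_2}[H(p_l\otimes\tilde p_{l,s})-H(\tilde p_{l,s})]$ and $\sum_s q_s=1$, it suffices to show $R_s\geq\log(1+E_X/E_{Z_s})-5\log e\cdot\epsilon$ for every $s$; weighting by $q_s$ and summing then gives the theorem. The useful observation is that $R_s$ is exactly the rate that expansion coding achieves on a \emph{static} AEN channel whose noise has mean $E_{Z_s}$, when the non-adapted input bit probabilities \eqref{equ:AEN_p_l} are used. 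This reduces the fading problem to a high-SNR gap estimate of the type carried out for the static channel in \cite{Ozan:Expansion12}, now required to hold uniformly over $s$.

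First I would remove the output-convolution term. Because $p_l\leq 1/2$ we have $p_l\otimes\tilde p_{l,s}=\tilde p_{l,s}+p_l(1-2\tilde p_{l,s})\geq p_l$, and since both arguments lie in $[0,1/2]$ where $H$ is increasing, $H(p_l\otimes\tilde p_{l,s})\geq H(p_l)$. Hence $R_s\geq\sum_{l=-L_1}^{L_2}[H(p_l)-H(\tilde p_{l,s})]$. The value of this step is that $p_l$ and $\tilde p_{l,s}$ are precisely the binary-expansion bit probabilities (Lemma~\ref{lem:EXP_EXP}) of exponential variables with means $E_X$ and $E_{Z_s}$, with independent bits across levels. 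I would then invoke the quantization identity relating the entropy of a finely quantized continuous variable to its differential entropy: for an exponential of mean $m$, the per-level bit entropies summed from level $-L_1$ upward equal $L_1+\log(e\,m)+o(1)$. Applying this to both expansions, the common $L_1$ and $\log e$ terms cancel, and the leading behaviour of the lower bound is $\log(eE_X)-\log(eE_{Z_s})=\log(E_X/E_{Z_s})=\log\text{SNR}_s$, matching the per-level picture in which each level with $E_{Z_s}\lesssim 2^l\lesssim E_X$ contributes essentially one bit.

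It remains to bound the residual errors by $O(\log e\cdot\epsilon)$ each, which is exactly where the three hypotheses enter. The gap between target and leading term is $\log(1+\text{SNR}_s)-\log\text{SNR}_s=\log(1+1/\text{SNR}_s)\leq\log e/\text{SNR}_s\leq\log e\cdot\epsilon$ by $\min_s\text{SNR}_s\geq 1/\epsilon$. The condition $L_2\geq-\log\epsilon+\log E_X$ forces $2^{L_2}\geq E_X/\epsilon$, so the high-level tails $\sum_{l>L_2}H(p_l)$ and $\sum_{l>L_2}H(\tilde p_{l,s})$, both controlled by $p_l,\tilde p_{l,s}\leq e^{-2^l/E_X}$, are negligible; and $L_1\geq-\log\epsilon-\min_s\log E_{Z_s}$ makes the quantization resolution $2^{-L_1}$ at most $\epsilon E_{Z_s}\leq\epsilon E_X$, which bounds the $o(1)$ remainder in the quantization identity for \emph{both} expansions by $O(\log e\cdot\epsilon)$. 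Collecting the $\log(1+\cdot)$ gap, the two quantization remainders, and the two truncation tails assembles the constant $5\log e\cdot\epsilon$.

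The main obstacle I anticipate is the last step: converting the informal ``$+o(1)$'' in the quantization/differential-entropy identity into an explicit, uniform $O(\log e\cdot\epsilon)$ bound with the correct constant. Concretely one must estimate $\big|\sum_{l=-L_1}^{L_2}H(\pi_l(m))-(L_1+\log(em))\big|$ for $m\in\{E_X,E_{Z_s}\}$, where $\pi_l(m)=1/(1+e^{2^l/m})$, i.e. control the error of approximating the differential entropy of an exponential by the sum of its dyadic-level bit entropies, handling simultaneously the truncation at level $-L_1$ and the discretization error in the transition region $2^l\approx m$. This is the estimate already established for the static channel in \cite{Ozan:Expansion12}; the work here is to verify it holds uniformly across the states $s$ under the stated choices of $L_1,L_2$, so that the five contributions combine to exactly $5\log e\cdot\epsilon$.
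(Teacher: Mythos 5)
Your outer skeleton matches the paper's: the per-state reduction (the paper likewise proves the per-state estimate and averages over $q_s$ at the end) and the removal of the convolution term via $p_l\otimes\tilde{p}_{l,s}\geq p_l$ plus monotonicity of $H$ on $[0,1/2]$ are exactly the paper's step $(a)$. Where you genuinely diverge is the core estimate. The paper never evaluates either entropy sum in absolute terms: it uses the exact shift identity $p_l=\tilde{p}_{l+\eta_s-\xi,s}$ (with $\eta_s=\log E_{Z_s}$, $\xi=\log E_X$), so that $\sum_l[H(p_l)-H(\tilde{p}_{l,s})]$ telescopes into two boundary sums, one near level $-L_1$ and one near level $L_2$, each far from the transition region and killed by the one-sided per-level bounds \eqref{equ:bound1}--\eqref{equ:bound2} imported from \cite{Ozan:Expansion12}. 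You instead compute each truncated sum absolutely: since the bits of an exponential are independent (Lemma~\ref{lem:EXP_EXP}), $\sum_{l\geq -L_1}H(p_l)$ is the entropy of the dyadically quantized $\mathrm{Exp}(E_X)$ variable, nominally $L_1+\log(eE_X)+o(1)$, and similarly for the noise; subtraction then produces $\log(E_X/E_{Z_s})$. Your route is conceptually more transparent (it explains the answer via $h(\mathrm{Exp}(m))=\log(em)$), and it has a side benefit: because you never shift summation indices, you avoid the fact that the paper's telescoping cancellation is exact only when $\eta_s-\xi=-\log \mathrm{SNR}_s$ is an integer; otherwise the shifted and unshifted sums live on different lattices, a point the paper's chain of equalities glosses over.

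Two things in your plan need repair before it is a proof. First, the ``quantization identity with $o(1)$'' hides a directionality issue: to lower-bound $R_s$ you need a \emph{lower} bound on the signal sum but an \emph{upper} bound on the noise sum. The lower bound is free (by Jensen, any $\Delta$-quantization has entropy at least $h+\log(1/\Delta)$; alternatively, the quantized exponential is geometric with closed-form entropy), but the upper bound $\sum_{l\geq-L_1}H(\tilde{p}_{l,s})\leq L_1+\log(eE_{Z_s})+\log e\cdot\epsilon$ is the nontrivial direction, is not a generic fact about quantization, and must be proved explicitly using $2^{-L_1}\leq\epsilon E_{Z_s}$ (this is where your $L_1$ hypothesis enters, uniformly in $s$); it is also not what this paper quotes from \cite{Ozan:Expansion12}, which supplies only the per-level bounds. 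Second, your budget of five error terms each of size $\log e\cdot\epsilon$ is not how the constant actually assembles: the noise tail above $L_2$ enters with a \emph{favorable} sign and can simply be dropped, the signal quantization remainder is zero (Jensen), and the signal tail above $L_2$ costs $3\log e\cdot\epsilon$ when bounded through \eqref{equ:bound1}; so the count is $1+3+1$ (SNR gap, signal tail, noise quantization remainder), exactly mirroring the paper's $1+3+1$ from step $(f)$ and the two boundary sums in steps $(c)$--$(e)$. With these two corrections your approach does close and yields the same constant $5\log e\cdot\epsilon$.
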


\begin{proof}
We first state a bound for the entropy of channel noise with mean $E_{Z_s}$ at level $l$
\begin{align}
&H(\tilde{p}_{l,s})\leq3\log e\cdot 2^{-l+\eta_s}   \;\quad \text{for }l>\eta_s,\label{equ:bound1}\\
&H(\tilde{p}_{l,s})\geq1-\log e\cdot 2^{l-\eta_s}      \quad \text{for }l\leq \eta_s,\label{equ:bound2}
\end{align}
where $\eta_s \triangleq\log E_{Z_s}$. The proofs of these bounds for the case of $E_{Z_s}=1$ are detailed in Lemma~4 of \cite{Ozan:Expansion12}. Here, we obtain these bounds by following the same steps given there. (Details are omitted for brevity.)

Now, if we denote $\xi\triangleq\log E_X$, then comparing the definitions of $p_l$ and $\tilde{p}_{l,s}$, we get
\begin{equation}
p_l=\frac{1}{1+e^{2^{l}/E_X}}=\tilde{p}_{l+\eta_s-\xi,s}.\label{equ:pqequation}
\end{equation}
Based on these observations, we have
\begin{align}
    \sum_{l=-L_1}^{L_2}&[H(p_l\otimes \tilde{p}_{l,s})-H(\tilde{p}_{l,s})]\nonumber\\
\overset{(a)}{\geq}&\sum_{l=-L_1}^{L_2}[H(p_l)-H(\tilde{p}_{l,s})]\nonumber\\
\overset{(b)}{=}   &\sum_{l=-L_1}^{L_2}[H(\tilde{p}_{l+\eta_s-\xi,s})-H(\tilde{p}_{l,s})]\nonumber\\
=   &\sum_{l=-L_1+\eta_s-\xi}^{L_2+\eta_s-\xi}H(\tilde{p}_{l,s})-\sum_{l=-L_1}^{L_2}H(\tilde{p}_{l,s})\nonumber\\
=   &\sum_{l=-L_1+\eta_s-\xi}^{-L_1-1}H(\tilde{p}_{l,s})-\sum_{l=L_2+\eta_s-\xi+1}^{L_2}H(\tilde{p}_{l,s})\nonumber\\
\overset{(c)}{\geq}&\sum_{l=-L_1+\eta_s-\xi}^{-L_1-1}\left[1-\log e\cdot 2^{l-\eta_s}\right]-\sum_{l=L_2+\eta_s-\xi+1}^{L_2}3\log e\cdot 2^{-l+\eta_s}\nonumber\\
\overset{(d)}{\geq}&\xi-\eta_s-\log e\cdot 2^{-L_1-\eta_s}-3\log e\cdot 2^{-L_2+\xi}\nonumber\\
\overset{(e)}{\geq}&\log\left(\frac{E_X}{E_{Z_s}}\right)-\log e\cdot \epsilon-3\log e\cdot \epsilon\nonumber\\
\overset{(f)}{\geq}&\log\left(1+\frac{E_X}{E_{Z_s}}\right)-\log e\cdot\frac{E_{Z_s}}{E_X}-\log e\cdot \epsilon-3\log e\cdot\epsilon\nonumber\\
\overset{(g)}{\geq}&\log\left(1+\frac{E_X}{E_{Z_s}}\right)-5\log e \cdot \epsilon,\label{eq:AEN_proof}
\end{align}
where
\begin{itemize}
\item[$(a)$] is due to $p_l\otimes \tilde{p}_{l,s}\triangleq p_l(1-\tilde{p}_{l,s})+\tilde{p}_{l,s}(1-p_l)\geq p_l$, and then due to the fact that entropy function is increasing on $[0,0.5]$ (and, we have $p_l\otimes \tilde{p}_{l,s}\leq 0.5$);
\item[$(b)$] follows from equation \eqref{equ:pqequation};
\item[$(c)$] follows from bounds \eqref{equ:bound1} and \eqref{equ:bound2};
\item[$(d)$] follows as
$$\sum\limits_{l=-L_1+\eta_s-\xi}^{-L_1-1} 2^{l-\eta_s} \leq 2^{-L_1-\eta_s},$$
 and
$$\sum\limits_{l=L_2+\eta_s-\xi+1}^{L_2} 2^{-l+\eta_s} = \sum\limits^{-L_2+\xi-1}_{l=-L_2+\eta_s} 2^l \leq 2^{-L_2+\xi};$$
\item[$(e)$] follows from theorem assumptions that $L_1\geq -\log \epsilon-\min\limits_s \eta_s$, and $L_2\geq -\log \epsilon+\xi$;
\item[$(f)$] is due to the fact that $\log(1+E_X/E_{Z_s})-\log(E_X/E_{Z_s})=\log(1+E_{Z_s}/E_{X})\leq \log e \cdot E_{Z_s}/E_X$ (as $\ln(1+x)\leq x$ for any $x\geq 0$);
\item[$(g)$] is due to the assumption in theorem that $\min\limits_s \text{SNR}_s\geq 1/\epsilon$.
\end{itemize}
Then, using \eqref{eq:AEN_proof} in \eqref{equ:achievable_rate} of Theorem~\ref{thm:AEN_optimization}, we have
\begin{align}
R   &=\sum_{s=1}^{S}q_s\left\{\sum_{l=-L_1}^{L_2}[H(p_l\otimes \tilde{p}_{l,s})-H(\tilde{p}_{l,s})]\right\}\nonumber\\
    &\geq \sum_{s=1}^{S}q_s\left\{\log\left(1+\frac{E_X}{E_{Z_s}}\right)-5\log e \cdot \epsilon\right\}\nonumber\\
    &=\sum_{s=1}^{S}q_s\left[\log\left(1+\frac{E_X}{E_{Z_s}}\right)\right]-5\log e \cdot \epsilon. \nonumber
\end{align}
\end{proof}

\begin{remark}
We note that the proposed scheme achieves a rate
$$\sum_{s=1}^{S}q_s\left[\log\left(1+\frac{E_X}{E_{Z_s}}\right)\right]=C_{\text{CSI-ED, MPB}},$$
which is an upper bound on the capacity for the CSI-D case in the high SNR regime. (See Lemma~\ref{lem:Capacity_Fading_AEN}.)
Therefore, the proposed scheme achieves the capacity in the high SNR regime.
\end{remark}

\begin{figure}[t]
 \centering
 \includegraphics[width=0.8\columnwidth]{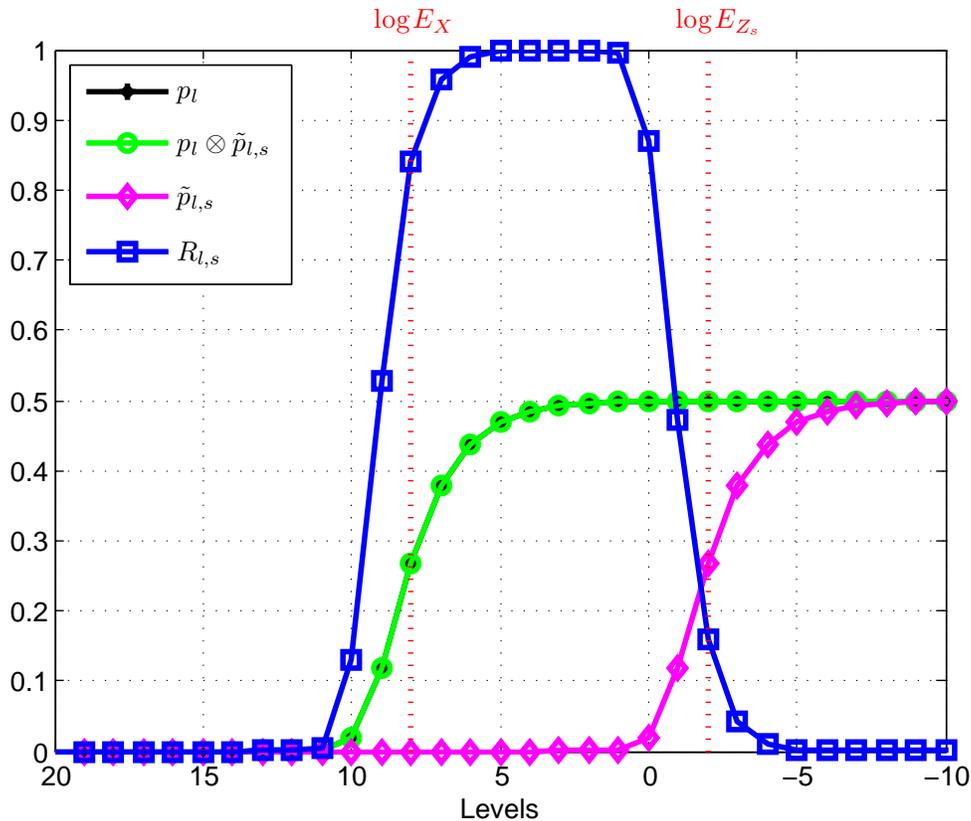}
 \caption{{\bf Illustration of signal, noise and rate at each level.} In this numerical result, we only concern about a single state $s$, and choose $E_X=2^8$, $E_{Z_s}=2^{-2}$. Note that, in this case, the curve of $p_l\otimes \tilde{p}_{l,s}$ (in black) almost coincides with the one of $p_l$ (in green), because of high SNR. Moreover, the signal $p_l$ (in green) is a left shifted version of noise $\tilde{p}_{l,s}$ (in purple) by $\log E_X-\log E_{Z_s}$ number of levels. Achievable rates at each level, $R_{l,s}\triangleq H(p_l\otimes \tilde{p}_{l,s})-H(\tilde{p}_{l,s})$, are represented in blue.
}
\label{fig:AEN_Rate_Level}
\end{figure}

Note that, for an exponential distribution with mean $1/\lambda$, its binary expansion result can be considered as the expansion of an exponential distribution with mean $1$ shifted by $\log (1/\lambda)$ number of levels. We show this phenomenon in Fig.~\ref{fig:AEN_Rate_Level}. To this end, Theorem~\ref{thm:AENrate} shows that in order to achieve the capacity of fading AEN channel, first, SNR should be large enough, and secondly, the number of expanded levels should also be large enough such that the highest level exceeds all the left shifted levels of expanded signal, and the lowest level exceeds the right shifted levels of expanded noises. Hence, in total, basically we need $\log \text{SNR}_{\max}$ ($\text{SNR}_{\max}\triangleq \max\limits_s E_X/E_{Z_s}$) number of levels to cover all ``non-trivial'' levels for coding, as well as extra $-2\log \epsilon$ number of levels to shoot for accuracy. At this point, the complexities of encoding and decoding are both given by $O\big((\log \text{SNR}_{\max}-2\log\epsilon)NB\log (NB)\big)$, where $O(NB\log (NB))$ is the complexity scale for fading BSC derived in the previous section.

\subsection{Numerical Results}

In this section, we analyze the rate obtained from Theorem~\ref{thm:AEN_optimization} with parameter $p_l$ chosen as \eqref{equ:AEN_p_l}. Numerical results are illustrated in Fig.~\ref{fig:AEN_Rate}, where we consider the case of two fading states. It is evident from the figure, and also from the theoretical analysis given in Theorem~\ref{thm:AENrate}, that our proposed polar coding scheme together with expansion coding achieves the upper bound on the channel capacity (Lemma~\ref{lem:Capacity_Fading_AEN}) in the high SNR regime. Therefore, the proposed coding scheme achieves the channel capacity for sufficiently large SNR.

We also note that the coding scheme does not perform well in the low SNR regime, which mainly results from two reasons. First, the upper bound we derived in Lemma~\ref{lem:Capacity_Fading_AEN}, which is the target rate in our coding scheme, is not tight in the low SNR regime. Secondly, our choice of $p_l$ only behaves as a good approximation for sufficiently high SNR, which limits the proposed scheme to be effective at the corresponding regime. However, as evident from the numerical results, for a fairly large set of SNR values the proposed scheme is quite effective. In addition, the upper bound curve is equal to $C_{\text{CSI-ED, MPB}}$, the capacity when the input mean constraint is imposed per block (instead of averaging over the blocks). Therefore, for the scenario of having input constraint per each fading block, the upper bound $C_{\text{CSI-D, MPB}} \leq C_{\text{CSI-ED, MPB}}$ holds at any SNR, and the only degradation in our coding scheme is due to the second point discussed above.

\begin{figure}[t]
 \centering
 \includegraphics[width=0.8\columnwidth]{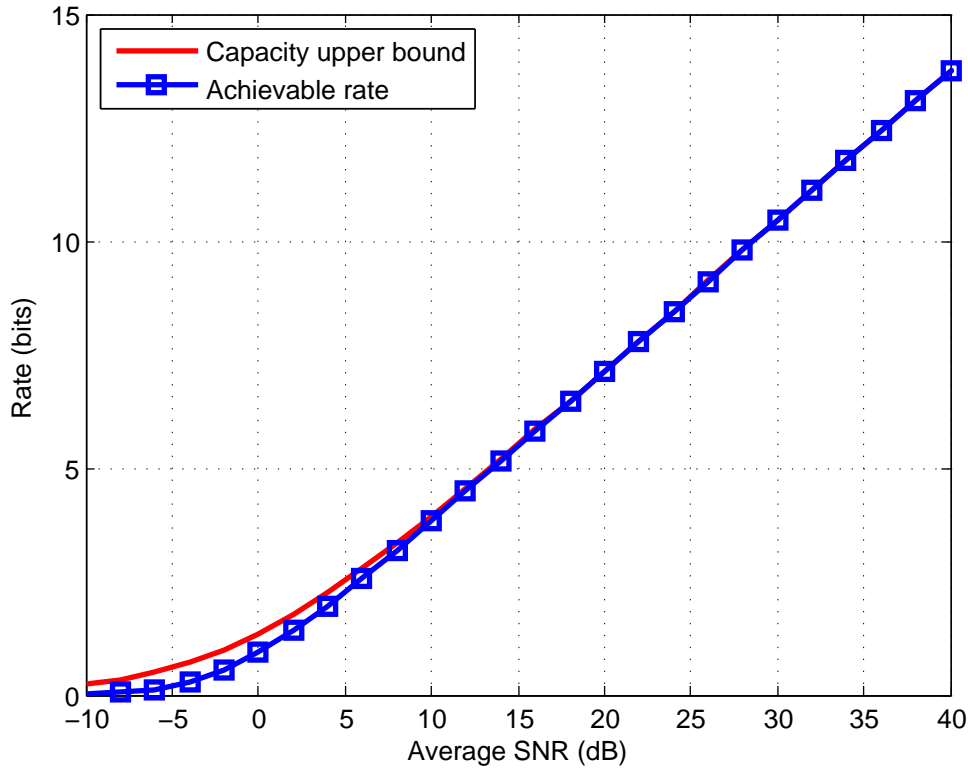}
 \caption{{\bf Numerical results.} The upper bound of ergodic capacity, $C_{\text{CSI-ED, MPB}}$, which is equal to $C_{\text{CSI-ED}}$ for sufficiently large SNR, is given by the red curve. The achievable rate is given by the blue curve. In this analysis, only two fading states are concerned, and the parameters are chosen as $E_{Z_1}=0.5$, $E_{Z_2}=3$, $q_1=0.8$, and $q_2=0.2$. Average SNR is defined as $E_X/(\sum\limits_{s=1}^S q_sE_{Z_s})$.
}
\label{fig:AEN_Rate}
\end{figure}

\section{Conclusion}

In this paper, polar coding schemes for fading binary symmetric channel (BSC) and fading additive exponential noise (AEN) channel are proposed. First, a hierarchical polar coding scheme is proposed for the fading BSC. This novel scheme, by exploiting an erasure decoding approach at the receiver, utilizes the polarization results of different BSCs. (These BSCs are defined over channel uses at a given fading block and over fading blocks at a given channel use index.) This novel polar coding technique is shown to be capacity achieving for fading BSC. Remarkably, the proposed scheme does not assume channel state information at the transmitter and fading BSC models the fading additive white Gaussian noise (AWGN) channel with a BPSK modulation. Therefore, our results are quite relevant to the practical channel models considered in wireless communications.

Towards utilizing the proposed techniques for encoding over another fading channel model, we focused on fading AEN channel. For this model, expansion coding \cite{Ozan:Expansion12} is adopted to convert the problem of coding over an analog channel into coding over discrete channels. By performing this expansion approach and making the resulting channels independent (via decoding the underlying carries), a fading AEN channel is decomposed into multiple independent fading BSCs (with a reliable decoding of the carries). By utilizing the hierarchical polar coding scheme for fading BSC, both theoretical proof and numerical results showed that the proposed approach achieves the capacity of this fading channel in the high SNR regime.

We remark that the advantages of polar codes in rate and complexity are both inherited in the proposed coding schemes. More precisely, as polar codes achieve channel capacity of BSC and BEC, our hierarchical utilization of polar codes also achieves the capacity of fading BSC, and this result is further utilized to guarantee that expansion coding scheme can achieve the capacity of fading AEN channel in the high SNR regime (with low complexity in all cases).

Although the discussion in this paper focuses only on fading BSC and fading AEN channel, the proposed coding scheme could be generalized to more general cases. For example, by utilizing non-binary polar codes, our polar coding scheme can be generalized to a fading non-binary discrete symmetric channel. This result can then be utilized for an AWGN fading channel with more constellation points, such as QPSK. Moreover, the expansion coding scheme can also be used for other analog channels that have noise statistics other than exponential, e.g., Gaussian. Here, even though these distributions may not be perfectly approximated by a set of independent discrete random variables, expansion coding scheme can still perform well, especially at high SNR.

Finally, we note that the proposed coding scheme requires long codeword lengths to make the error probability arbitrarily small. This requirement translates to requiring long coherence intervals and large number of fading blocks as our approach utilizes coding over both channel uses and fading blocks. (This is somewhat similar to the analyses in Shannon theory, where the guarantee of the coding is that the error probability vanishes as the block length gets large.) Therefore, our coding scheme fits to the fading channels with moderate/long coherence time and large number of fading blocks. Here, we comment on applicability of the proposed coding scheme in typical wireless systems. As reported in \cite[page 219]{sesia2009lte}, LTE systems operating at $1.8$GHz frequency with $20$MHz bandwidth typically have fading durations of $2.8\times10^5$ to $1.0\times10^7$ channel uses. In addition, WiFi systems operating at $5$GHz frequency with $20$MHz bandwidth typically have fading durations of $7.7\times 10^5$ to $1.8\times 10^7$ channel uses \cite[pages 98-99]{perahia2008next}. (Here, a mobile speed of $1$m/s is assumed for both systems.) Polar codes, on the other hand, typically have error rates around $10^{-6}$ when the blocklength is around $2^{10}$, and a smaller error probability is even possible, when the decoding is implemented with a better decoder \cite{Tse:Polar12}. For instance, instead of the classical SC decoder, a list decoder \cite{Tse:Polar12} can be utilized. Finally, besides long coherence intervals, another requirement for the proposed coding scheme is to have large number of fading blocks. This requirement can be satisfied in many practical scenarios at the expense of having large decoding delays. To summarize, for a given wireless system and a choice of one of the encoding/decoding strategies discussed above, the resulting error rate and its propagation in the proposed decoding algorithm should be studied further. We leave the analysis of such applications of the proposed techniques to a future work.

%


\bibliographystyle{IEEEtran}


\end{document}